\newtheorem{thm}    {Theorem}%[section]
\newtheorem{lem}     {Lemma}%[section]
\newtheorem{condition}  {Condition}%[section]
\newcommand{\red}{}
\def\mix{\mathop{\rm mix}}
\newcommand{\defeq}{\stackrel{\rm def}{=}}
\newcommand{\bF}{\mathbb{F}}
\def\cA{{\cal A}}
\def\cE{{\cal E}}
\def\cY{{\cal Y}}
\def\cZ{{\cal Z}}
\def\cD{{\cal D}}
\def\rE{{\rm E}}
\def\rP{{\rm P}}
\newcommand{\cX}{{\cal X}}
\newcommand{\bX}{{\bf X}}
\newcommand{\bY}{{\bf Y}}
\newcommand{\lleq}{\mathrel{\mathpalette\gl@align<}}
\newcommand{\ggeq}{\mathrel{\mathpalette\gl@align>}}
\newcommand{\gl@align}[2]{%\lower.1ex
\vbox{\baselineskip\z@skip\lineskip\z@
\ialign{$\m@th#1\hfil##\hfil$\crcr#2\crcr{}_{{}_{(=)}}\crcr}}}
\def\Label#1{\label{#1}\ [\ #1\ ]\ }
\def\Label{\label}
\begin{document}
\title{Exponential decreasing rate of leaked information\\ in universal random privacy amplification}
\author{
Masahito Hayashi
\thanks{
M. Hayashi is with Graduate School of Information Sciences, Tohoku University, Aoba-ku, Sendai, 980-8579, Japan
(e-mail: hayashi@math.is.tohoku.ac.jp)}}
\date{}
\maketitle
\begin{abstract}
We derive
a new upper bound for Eve's information in secret key generation from a common random number without communication.
This bound improves on Bennett\cite{BBCM}'s bound based on the R\'{e}nyi entropy of order 2
because the bound obtained here uses the R\'{e}nyi entropy of order $1+s$ for $s \in [0,1]$.
This bound is applied to a wire-tap channel.
Then, we derive an exponential upper bound for Eve's information.
Our exponent is compared with Hayashi\cite{Hayashi}'s exponent.
For the additive case, the bound obtained here is better.
The result is applied to secret key agreement by public discussion.
\end{abstract}
\begin{keywords}
exponential rate,
non-asymptotic setting,
secret key agreement,
universal hash function,
wire-tap channel
\end{keywords}
\section{Introduction}
\PARstart{T}{he}
study of secure communication in the presence of an eavesdropper began with 
Wyner\cite{Wyner}.
Following Wyner, 
Csisz\'{a}r \& K\"{o}rner\cite{CK79} 
dealt with this topic.
In this study, we consider a sender Alice,
an authorized receiver Bob and an unauthorized receiver Eve, who is referred to as a wire-tapper.
This research treats two channels, a channel to Bob and a channel to Eve;
such a model is called a wire-tap channel. 
Whereas the studies above treated the discrete memoryless case,
Hayashi\cite{Hayashi} derived a general capacity formula for an
arbitrary sequence of wire-tap channels.
In this model,
amount of Eve's accessible information is evaluated by  
the mutual information $I_E (\Phi)$ between Alice's and Eve's variables with the code $\Phi$,
and is abbreviated to Eve's information.
Several papers \cite{Ren05,RW,Cannetti} in cryptography community
adopt the leaked information criterion
based on the variational distance
while
several papers \cite{AC93,CK79,Hayashi,Wyner,Csiszar,Naito} in information theory community
adopt the leaked information criterion
based on the mutual information.
As is illustrated in Appendix III,
there exists an example where 
the leaked information criterion based on the mutual information 
is more restrictive than that based on variational distance.
Hence, we adopt the leaked information criterion based on the mutual information.

As was shown by Csisz\'{a}r \cite{Csiszar},
in the discrete memoryless case, 
if the transmission rate is less than the capacity and if we choose suitable codes,
Eve's information goes to zero exponentially.
That is, when
the given channel is used with $n$ times,
Eve's information $I_E (\Phi_n)$ with a suitable code $\Phi_n$ behaves as $e^{-n r}$.
In order to estimate the speed of the convergence,
we focus on the {\it exponential decreasing rate} of Eve's information,
which is referred to as the {\it exponent} of Eve's information:
\begin{align}
\lim_{n \to \infty}
\frac{-1}{n}\log I_E (\Phi_n).
\end{align}
Hayashi\cite{Hayashi} estimates 
this exponent 
for the wire-tap channels in the discrete memoryless case.
This type of evaluation is quite useful for estimating Eve's information from a finite-length code.
The first purpose of this paper is to improve the previous 
exponent of Eve's information.

On the other hand, 
using the R\'{e}nyi entropy of order 2,
Bennett et al \cite{BBCM}
evaluate Eve's information after the application of a universal$_2$ hashing function\cite{Carter}.
Their result gives an upper bound of Eve's information for the generation of a secret key from a common random number without communication.
Renner and Wolf \cite{RW} and
Renner \cite{Ren05} improved this approach 
and obtained evaluations based on smooth R\'{e}nyi entropy.
Renner \cite{Ren05} applied his method to the security analysis of quantum key distribution.
However, no research studied the relation between these results related to various kinds of R\'{e}nyi entropies
and the above results concerning wire-tap channel.
%Nobody studies this relation.

The main purpose of this paper is to generalize Bennett et al \cite{BBCM}'s result and to apply it to wire-tap channel model.
As the first step, 
in Section \ref{s4}, 
we focus on secret key generation from a common random number without communication.
Even in this model,
we highlight the exponent of Eve's information
in the case of independent and identical distribution (i.i.d. case).
In subsection \ref{s41},
we extend the result of Bennett et al \cite{BBCM} 
to the case of the R\'{e}nyi entropy of order $1+s$ for $s \in [0,1]$
and obtain
a new upper bound for Eve's information in this problem
as the main theorem.
We apply this bound to the i.i.d. case.
Then, derived a lower bound of 
the exponent of Eve's information.
In subsection \ref{s42},
we also apply Renner and Wolf \cite{RW}'s method to the evaluation of 
the exponent of Eve's information.
Then, another lower bound is derived based on smooth R\'{e}nyi entropy.
It is shown that 
the lower bound based on R\'{e}nyi entropy of order $1+s$  
is better than that based on smooth R\'{e}nyi entropy.

In Section \ref{s2}, 
\red{based on universal$_2$ hash function,}
%applying the evaluation obtained in subsection \ref{s41},
we derive an upper bound for Eve's information from random coding in a wire-tap channel.
%The upper bound obtained here satisfies the
%concavity property with respect to the distribution of Alice's system.
%This property is essential for connecting this proof with secret key generation from a common random number without communication.
The method we present contrasts
with the method in Hayashi\cite{Hayashi}.
Hayashi\cite{Hayashi} deals with channel resolvability and applies it to the security of 
wire-tap channel; This approach was strongly motivated by Devetak \cite{Deve} 
and Winter et al \cite{WNI}.
In Section \ref{s6}, we show that
this upper bound for Eve's information is better than 
Hayashi\cite{Hayashi}'s bound for the wire-tap channel model.

In a realistic setting, 
it is usual to restrict our codes to linear codes.
However, no existing result gives a code satisfying the following conditions:
(1) The code is constructed by linear codes.
(2) Eve's information exponentially goes to zero when 
the transmission rate is smaller than 
the difference between the mutual information from Alice to Bob and that to Eve.
In Section \ref{s3}, we make a code satisfying the above conditions.
That is, we make our code generated by a combination of arbitrary linear codes 
and privacy amplification by 
the concatenation of Toeplitz matrix \cite{Krawczyk} and the identity.
Under this kinds of code, 
\red{applying the evaluation obtained in subsection \ref{s41}
and the concavity property of the key quantity given in \ref{s2},}
we obtain \red{another} upper bound for Eve's information.
when the channel is an {\it additive} channel, i.e., 
the probability space and the set of input signals are given as 
the same finite module and 
the probability transition matrix $W_a(b)$ corresponding to the channel
is given as $P(a-b)$ with a probability distribution on the finite module.
This fact holds when
the channel is a variant of an additive channel.

In Section \ref{s5},
we also apply our result to secret key agreement with public discussion,
which has been treated by
Ahlswede \& Csisz\'{a}r\cite{AC93}, Maurer\cite{Mau93}, %Bennett et al\cite{BBCM}
and Muramatsu\cite{MUW05} et al.
Maurer \cite{Mau93} and Ahlswede \& Csisz\'{a}r\cite{AC93} 
showed that the optimal 
key generation rate is the difference of conditional entropies $H(A|E)-H(A|B)$,
where $A$, $B$, $E$  are the random variables for 
Alice, Bob, and Eve, respectively.
Csisz\'{a}r\cite{Csiszar}, Renner\cite{Ren05}, and Naito et al \cite{Naito}
mentioned the existence of 
a bound for Eve's information that exponentially goes to zero when the key generation rate is smaller than $H(A|E)-H(A|B)$.
However, no existing result 
clearly gives a lower bound for the exponential decreasing rate
for Eve's information 
when the key generation rate is smaller than $H(A|E)-H(A|B)$.
Applying our result, 
we obtain such a lower bound for the exponential decreasing rate for Eve's information.
In this case, we apply our code to a wire-tap channel with a variant of additive channels.
Our protocol can be realized by 
a combination of
a linear code and privacy amplification by 
the concatenation of Toeplitz matrix \cite{Krawczyk} and the identity.

In Appendix \ref{s7}, we prove the main theorem mentioned in Section \ref{s4}.
In Appendix \ref{stoep}, we show that the concatenation of Toeplitz matrix \cite{Krawczyk} and the identity is a universal$_2$ hashing function \cite{Carter}.

\section{Secret key generation without communication}\Label{s4}
\subsection{Method based on R\'{e}nyi entropy of order $1+s$}\Label{s41}
Firstly,
we consider the secure key generation problem from
a common random number $a \in \cA$ which has been partially eavesdropped on by Eve.
For this problem, it is assumed that Alice and Bob share a common random number $a \in \cA$,
and Eve has another random number $e \in \cE$, which is correlated to the random number $a$. 
The task is to extract a common random number 
$f(a)$ from the random number $a \in \cA$, which is almost independent of 
Eve's random number $e \in \cE$.
Here, Alice and Bob are only allowed to apply the same function $f$ to the common random number $a \in \cA$.
In order to discuss this problem, 
for $s \in [0,1]$, we define the functions
\begin{align*}
\tilde{H}_{1+s}(X|P^X)&:=-\log \sum_{x} P^{X}(x)^{1+s}  \\
\tilde{H}_{1+s}(X|Y|P^{X,Y})
&:=-\log \sum_{x,y} P^Y(y)P^{X|Y}(x|y)^{1+s}  \\
&=-\log \sum_{x,y} P^{X,Y}(x,y)^{1+s} P^Y(y)^{-s}. 
\end{align*}
Using these functions, we can define 
R\'{e}nyi entropy of order $1+s$
\begin{align*}
H_{1+s}(X|P^X):=\frac{\tilde{H}_{1+s}(X|P^X)}{s}
\end{align*}
and the conditional R\'{e}nyi entropy of order $1+s$:
\begin{align*}
H_{1+s}(X|Y|P^{X,Y}):=\frac{\tilde{H}_{1+s}(X|Y|P^{X,Y})}{s}.
\end{align*}
If there is no possibility for confusion,
$P^{X,Y}$ is omitted.

Now, we focus on an ensemble of the functions $f_{\bX}$ from 
$\cA$ to $\{1, \ldots, M\}$, where $\bX$ denotes a random variable describing 
the stochastic behavior of the function $f$.
An ensemble of the functions $f_{\bX}$ is called universal$_2$ 
when it satisfies the following condition\cite{Carter}:
\begin{condition}\Label{C1}
$\forall a_1 \neq \forall a_2\in \cA$,
the probability that $f_{\bX}(a_1)=f_{\bX}(a_2)$ is
at most $\frac{1}{M}$.
\end{condition}
We sometimes require 
the following additional condition:
\begin{condition}\Label{C12}
For any $\bX$,
the cardinality of $f_{\bX}^{-1}\{i\}$
does not depend on $i$.
\end{condition}
This condition will be used in Section III.

Indeed, when the cardinality $|\cA|$ is a power of a prime power $q$
and $M$ is another power of the same prime power $q$,
an ensemble $\{f_{\bX}\}$ 
satisfying the both conditions
is given by the 
the concatenation of Toeplitz matrix and the identity 
$(\bX,I)$\cite{Krawczyk}
only with $\log_q |\cA|-1$ random variables taking values in the finite filed $\bF_q$.
That is, the matrix $(\bX,I)$ has small complexity.
The construction and its proof are given in Appendix \ref{stoep}.

When $M$ is an arbitrary integer and 
the cardinality $|\cA|$ is an arbitrary multiple of $M$,
an ensemble $\{f_{\bX}\}$ 
satisfying the both conditions
is given in the following way.
First, we fix a function $f$ from $\cA$ to $\{1, \ldots, M\}$
such that the cardinality $|f^{-1}\{i\}|$ is $\frac{|\cA|}{M}$.
We randomly choose an permutation $\sigma \in S_{\cA}$ on $\cA$ with the uniform distribution,
where $S_{\cA}$ denotes the set of permutation on $\cA$.
So, we can make a random function $\{f\circ \sigma \}_{\cA}$.
This ensemble satisfies the both conditions.

As is shown in the Appendix \ref{s7}, we obtain the following theorem.
\begin{thm}\Label{thm1}
When the ensemble of the functions $\{f_{\bX}\}$ is 
universal$_2$, it satisfies 
\begin{align}
\rE_\bX H(f_{\bX}(A)|E|P^{A,E}) 
&\ge 
\log M - 
\frac{M^s e^{-\tilde{H}_{1+s}(A|E|P^{A,E})}}{s}\nonumber\\
&=
\log M - 
\frac{e^{s(\log M-H_{1+s}(A|E|P^{A,E}))}}{s}
\Label{5-14-1}
\end{align}
for $0 < \forall s \le 1$.
\end{thm}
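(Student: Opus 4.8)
The plan is to control the single quantity
\begin{align*}
Z_{\bX}(e):=\sum_{i} P^{f_{\bX}(A),E}(i,e)^{1+s}P^E(e)^{-s},
\end{align*}
since by the definition of $H_{1+s}$ we have $H(f_{\bX}(A)|E|P^{A,E})=-\tfrac1s\log\sum_e Z_{\bX}(e)$. First I would apply Jensen's inequality to the convex map $x\mapsto-\log x$, giving
\begin{align*}
\rE_\bX H(f_{\bX}(A)|E|P^{A,E})\ge -\frac1s\log \rE_\bX \sum_{e} Z_{\bX}(e),
\end{align*}
so it suffices to prove $\rE_\bX\sum_e Z_{\bX}(e)\le e^{-\tilde H_{1+s}(A|E|P^{A,E})}+M^{-s}$; factoring out $M^{-s}$ and using $\log(1+x)\le x$ then delivers \eqref{5-14-1}, and the second displayed form follows from $\tilde H_{1+s}=sH_{1+s}$ together with $M^s=e^{s\log M}$.

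To bound the inner expectation I would fix $e$, abbreviate $p_a:=P^{A,E}(a,e)$, and put $T_i:=\sum_{a:f_{\bX}(a)=i}p_a=P^{f_{\bX}(A),E}(i,e)$. The key algebraic step is to regroup the sum of $(1+s)$-th powers so that only one factor $p_a$ remains outside a power:
\begin{align*}
\sum_i T_i^{1+s}=\sum_i T_i\cdot T_i^{s}=\sum_a p_a\Bigl(\sum_{a':f_{\bX}(a')=f_{\bX}(a)}p_{a'}\Bigr)^{s}.
\end{align*}
Next I would take $\rE_\bX$, use the concavity of $t\mapsto t^{s}$ on $[0,\infty)$ (Jensen again, now for the $s$-th power) to move the expectation inside that power, and invoke the universal$_2$ property (Condition \ref{C1}) — that for $a'\neq a$ the probability of $f_{\bX}(a')=f_{\bX}(a)$ is at most $1/M$ — to obtain $\rE_\bX\sum_{a':f_{\bX}(a')=f_{\bX}(a)}p_{a'}\le p_a+\tfrac1M\sum_{a'}p_{a'}=p_a+\tfrac1M P^E(e)$. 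Finally the subadditivity $(u+v)^{s}\le u^{s}+v^{s}$, valid for $s\in[0,1]$, gives $\rE_\bX\sum_i T_i^{1+s}\le\sum_a p_a^{1+s}+M^{-s}P^E(e)^{1+s}$; multiplying by $P^E(e)^{-s}$ and summing over $e$ produces exactly $e^{-\tilde H_{1+s}(A|E|P^{A,E})}+M^{-s}$, as required.

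The step where care is genuinely needed is arranging the estimates so that the surviving R\'enyi term is the order-$(1+s)$ quantity $\sum_{a,e}P^{A,E}(a,e)^{1+s}P^E(e)^{-s}$ rather than something weaker. The naive route — expanding $T_i^{1+s}=(\sum_a p_a)^{1+s}$ into diagonal plus off-diagonal terms via subadditivity of $t\mapsto t^{s}$ and only then applying Condition \ref{C1} — leaves a leftover proportional to $\sum_{a'}P^{A|E}(a'|e)^{s}$, which is not controlled by $e^{-\tilde H_{1+s}(A|E)}$ and can be as large as $|\cA|^{1-s}$. It is the regrouping in the second displayed identity, combined with applying the concavity bound \emph{before} any expansion, that makes the bookkeeping close with the correct exponents; one should also check that the two uses of Jensen go in compatible directions (convex $-\log$ outermost, concave $t^{s}$ inside) and that the interchanges of $\rE_\bX$ with the finite sums over $i$, $a$ and $e$ are legitimate.
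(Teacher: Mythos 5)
Your argument follows the paper's own route almost step for step: the regrouping identity $\sum_i T_i^{1+s}=\sum_a p_a\bigl(\sum_{a':f_{\bX}(a')=f_{\bX}(a)}p_{a'}\bigr)^{s}$, applying concavity of $t\mapsto t^s$ to move $\rE_{\bX}$ inside \emph{before} any expansion, invoking Condition~\ref{C1}, then subadditivity $(u+v)^s\le u^s+v^s$, and finally $\log(1+x)\le x$ are exactly the paper's computations. Carrying the estimate out for fixed $e$ and then summing over $e$ is only a cosmetic reorganization of the paper's version, which first treats a marginal distribution $P$ and then averages over $E$ to get its equation~(\ref{5-14-2}). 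Your closing remark about why the concavity step must precede any expansion is correct and is indeed what makes the exponents close.

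There is, however, one genuine gap. The opening display asserts $H(f_{\bX}(A)|E|P^{A,E})=-\tfrac1s\log\sum_e Z_{\bX}(e)$. This is false as an equality: the right-hand side is $H_{1+s}(f_{\bX}(A)|E)$, the conditional R\'enyi entropy of order $1+s$ of the hashed key, whereas the quantity in the theorem is the Shannon conditional entropy $H(f_{\bX}(A)|E)$. To get from one to the other you need the monotonicity inequality $H(f_{\bX}(A)|E)\ge H_{1+s}(f_{\bX}(A)|E)$, equivalently $s\,H(f_{\bX}(A)|E)\ge\tilde{H}_{1+s}(f_{\bX}(A)|E)$, which the paper supplies explicitly via the concavity of the logarithm (Jensen applied to $\log\rE\bigl[(P^{A|E}(A|E))^s\bigr]\ge \rE\bigl[\log(P^{A|E}(A|E))^s\bigr]$). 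Once that link is inserted, your two uses of Jensen (first to pass from $H$ to $H_{1+s}$, then on the convex $-\log$ to exchange it with $\rE_{\bX}$) do go in compatible directions and the remainder of the proof is correct and coincides with the paper's.
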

Note that Bennett et al \cite{BBCM} proved this inequality for the case of $s=1$.

%The proof will be given in Section \ref{s7}.
Since
the mutual information
\begin{align*}
I(f_{\bX}(A):E|P^{A,E}) := H(f_{\bX}(A)|P^{A}) - H(f_{\bX}(A)|E|P^{A,E})  
\end{align*}
is bounded by $\log M -H(f_{\bX}(A)|E|P^{A,E})$,
we obtain
\begin{align}
\rE_{\bX} I(f_{\bX}(A):E|P^{A,E})
\le
\frac{M^s e^{-\tilde{H}_{1+s}(A|E|P^{A,E})}}{s}, 0 < s \le 1.
\Label{4-27-4}
\end{align}

This inequality implies the following theorem.
\begin{thm}
There exists a function $f$ from $\cA$ to $\{1, \ldots, M\}$ 
such that
\begin{align}
I(f(A):E)
&\le  \frac{M^s e^{-\tilde{H}_{1+s}(A|E|P^{A,E})}}{s}\nonumber \\
&=\frac{e^{s(\log M-H_{1+s}(A|E|P^{A,E}))}}{s},
\quad 0 \le \forall s \le 1.
\Label{4-25-1}
\end{align}
\end{thm}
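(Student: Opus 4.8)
The plan is to derive \eqref{4-25-1} from the averaged estimate \eqref{4-27-4} by a random-coding (probabilistic-method) argument, after one preliminary observation that lets a single function $f$ serve for the whole range of $s$. A universal$_2$ ensemble $\{f_{\bX}\}$ from $\cA$ to $\{1,\ldots,M\}$ always exists --- for instance, let $f_{\bX}$ be uniformly distributed over all functions $\cA \to \{1,\ldots,M\}$, for which the collision probability of any $a_1 \neq a_2$ is exactly $\frac{1}{M}$ --- so \eqref{4-27-4} is available to start from.

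The key point is that the left-hand side of \eqref{4-27-4}, namely $\rE_{\bX} I(f_{\bX}(A):E|P^{A,E})$, is a number that does not depend on $s$, whereas the right-hand side does. Hence \eqref{4-27-4} in fact yields
\begin{align*}
\rE_{\bX} I(f_{\bX}(A):E|P^{A,E})
\le
\inf_{0 < s \le 1}\frac{M^s e^{-\tilde{H}_{1+s}(A|E|P^{A,E})}}{s};
\end{align*}
denote this infimum by $R$, which is finite since already the $s=1$ term equals $M e^{-\tilde{H}_{2}(A|E|P^{A,E})} \le M$. Because the mean of the random variable $I(f_{\bX}(A):E|P^{A,E})$ over the ensemble is at most $R$, some realization $\bX = x$ satisfies $I(f_{x}(A):E) \le R$; set $f := f_{x}$. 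By the definition of $R$ as an infimum, this $f$ satisfies $I(f(A):E) \le R \le \frac{M^s e^{-\tilde{H}_{1+s}(A|E|P^{A,E})}}{s}$ for every $s \in (0,1]$, and the two right-hand expressions in \eqref{4-25-1} coincide because $H_{1+s} = \tilde{H}_{1+s}/s$. For the endpoint $s=0$ one has $\tilde{H}_{1}(A|E|P^{A,E}) = -\log\sum_{a,e}P^{A,E}(a,e) = 0$, so the right-hand side of \eqref{4-25-1} is $+\infty$ and the bound is trivial there (recall $I(f(A):E) \le \log M$). Thus the single $f$ chosen above works for all $0 \le s \le 1$.

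The argument is short, and the only step needing a moment's care is moving the infimum over $s$ inside the averaged inequality \emph{before} extracting the good function: applying the probabilistic method separately for each fixed $s$ would only give a function $f$ depending on $s$, which is weaker than what is claimed. The boundedness $I(f(A):E) \le \log M$ is also what guarantees $R$ is finite, so that the existence assertion is not vacuous, and it disposes of the degenerate endpoint $s=0$.
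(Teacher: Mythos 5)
Your proof is correct and fills in exactly the implication the paper leaves implicit after inequality (\ref{4-27-4}): since the averaged quantity $\rE_{\bX}I(f_{\bX}(A):E)$ is independent of $s$, one first minimizes the right-hand side over $s\in(0,1]$ and then applies the probabilistic method once, producing a single $f$ valid for every $s$, with the $s=0$ endpoint trivial because $\tilde{H}_1=0$ makes the bound infinite. This is the same route the paper intends (it simply writes ``This inequality implies the following theorem''), and you are right that the quantifier order is the one point that deserves the explicit remark you give.
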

In the following, we mainly use the quantity $\tilde{H}_{1+s}(A|E|P^{A,E})$
instead of $H_{1+s}(A|E|P^{A,E})$.
because the usage of $H_{1+s}(A|E|P^{A,E})$ requires 
more complicated calculation.

%\end{document}
Next, we consider the case when our distribution $P^{A_n E_n}$ 
is given by the $n$-fold independent and identical distribution of 
$P^{AE}$, i.e, $(P^{A,E})^n$.
Ahlswede and Csisz\'{a}r \cite{AC93} showed that
the optimal generation rate
\begin{align*}
&G(P^{AE}) \\
:=&
\sup_{\{(f_n,M_n)\}}
\left\{
\lim_{n\to\infty} \frac{\log M_n}{n}
\left|
\begin{array}{l}
\displaystyle
\lim_{n\to\infty} \frac{I(f_n(A_n):E_n)}{n}=0 \\
\displaystyle
\lim_{n\to\infty} \frac{H(f_n(A_n))}{\log M_n}=1
\end{array}
\right. \right\}
\end{align*}
equals the conditional entropy $H(A|E)$.
That is, the generation rate $R= \lim_{n\to\infty} \frac{\log M_n}{n}$
is smaller than $H(A|E)$,
Eve's information $I(f_n(A_n):E_n)$ goes to zero.
In order to treat the speed of this convergence,
we focus on the supremum of  
the {\it exponentially decreasing rate (exponent)} of 
$I(f_n(A_n):E_n)$ for a given $R$
\begin{align*}
&e_I(P^{AE}|R) \\
:=& \!\!\!
\sup_{\{(f_n,M_n)\}}\!\!
\left\{\!
\lim_{n\to\infty} \! \frac{-\log I(f_n(A_n):E_n)}{n}
\!\left|\!
%\begin{array}{l}
%\displaystyle
\lim_{n\to\infty} \!\!\frac{-\log M_n}{n} 
\!\le\! R\!
%\end{array}
\right. \right\}.
\end{align*}
Since the relation
$\tilde{H}_{1+s}(A_n|E_n|(P^{A,E})^n)= n \tilde{H}_{1+s}(A|E|P^{A,E})$ holds,
the inequality (\ref{4-25-1}) implies that
\begin{align}
e_I(P^{AE}|R) 
&\ge \max_{0 \le s \le 1}  \tilde{H}_{1+s}(A|E|P^{A,E})-sR\nonumber \\
&= \max_{0 \le s \le 1}  s (H_{1+s}(A|E|P^{A,E}-R)
\Label{4-16-4}
\end{align}
Since $\left.\frac{d}{ds}\tilde{H}_{1+s}(A|E|P^{A,E})\right|_{s=0}=H(A|E)$,
Eve's information $I(f_n(A_n):E_n)$ exponentially goes to zero for $R<H(A|E)$.

\subsection{Method based on smooth min-entropy}\Label{s42}
R\'{e}nyi entropy of order $2$ $H_2(A|E|P^{A,E})$ is bounded by 
the min-entropy
\begin{align*}
H_{\min}(A|E|P^{A,E}):= 
\min_{a,e:P^{A,E}(a,e)>0} -\log P^{A|E}(a|e),
\end{align*}
i.e., the inequality
\begin{align*}
H_2(A|E|P^{A,E})\ge H_{\min}(A|E|P^{A,E})
\end{align*}
holds. Then, 
(\ref{5-14-1}) with $s=1$ yields that
\begin{align}
%\rE_{\bX} I(f_{\bX}(A):E|P^{A,E}) \le
& \rE_{\bX} 
\log M + H(E |P^{A,E}) -H(f_{\bX}(A) E|P^{A,E})\nonumber  \\
= &
\rE_{\bX} 
\log M -H(f_{\bX}(A)|E|P^{A,E}) \nonumber \\
\le &
M e^{-H_{\min}(A|E|P^{A,E})}
\Label{4-14-1}.
\end{align}
Renner and Wolf \cite{RW} introduced the smooth min-entropy:
\begin{align}
& H_{\min}^{\epsilon}(A|E|P^{A,E}) \nonumber \\
:=&
\max_{\Omega:P^{A,E}(\Omega) \ge 1-\epsilon }
\min_{(a,e)\in \Omega} -\log P^{A|E}(a|e).
 \Label{4-14-2}
\end{align}
for $\epsilon \ge 0$.
This definition is different from that of Renner \cite{Ren05}.
Modifying the discussion by Renner and Wolf \cite{RW},
we can derive another upper bound of $\rE_{\bX} I(f_{\bX}(A):E)$
based on the smooth min-entropy 
$H_{\min}^{\epsilon}(A|E|P^{A,E})$ in the following way.

Using the variational distance $d(P^X,\tilde{P}^X)$:
\begin{align*}
d(P^X,\tilde{P}^X) :=
\sum_{x}|P^X(x)-\tilde{P}^X(x)|,
\end{align*}
we have the continuity of the Shannon entropy 
in the following sense:
When $d (P^X,\tilde{P}^X) \le \frac{1}{e}$,
the function
\begin{align*}
\eta(x,a):=- x\log x+x a
\end{align*}
satisfies the following inequality:
\begin{align*}
& |H(X|\tilde{P}^X)- H(X|P^X)| \\
\le & \eta( d (P^X,\tilde{P}^X),\log |{\cal X}|).
\end{align*}
Based on the variational distance,
we define the following modification:
\begin{align}
& \hat{H}_{\min}^{\epsilon}(A|E|P^{A,E}) \nonumber \\
:=&
\max_{\tilde{P}^{A,E}}\{H_{\min}(A|E|\tilde{P}^{A,E})|
d(\tilde{P}^{A,E},P^{A,E}) \le \epsilon
\},\Label{4-16-1}
\end{align}
where $\tilde{P}^{A,E}$ is a probability distribution.

For $0<\epsilon <1/2$,
we choose $\Omega$ satisfying the condition in (\ref{4-14-2}).
Then, 
$p_{\max}^{A|E}(\Omega):= \max_{(a,e)\in \Omega}  P^{A|E}(a|e) \ge \frac{1}{|A|}$.
We define the joint distribution $\tilde{P}^{A,E}(a,e)$ 
satisfying $\tilde{P}^{E}(e)={P}^{E}(e)$ in the following way.
For this purpose, it is sufficient to define the conditional distribution $\tilde{P}^{A|E}(a|e)$
for all $e$.
When $(a,e)\in \Omega$, 
the conditional distribution $\tilde{P}^{A|E}(a|e)$ is defined by
\begin{align*}
\tilde{P}^{A|E}(a|e):=
\left\{
\begin{array}{ll}
{P}^{A|E}(a|e) &  
\hbox{ if } {P}^{A|E}(a|e) \le p_{\max}^{A|E}(\Omega)\\
p_{\max}^{A|E}(\Omega) & 
\hbox{ if } 
{P}^{A|E}(a|e) > p_{\max}^{A|E}(\Omega) .
\end{array}
\right.
\end{align*}
When $(a,e)\notin \Omega$, 
we define $\tilde{P}^{A|E}(a|e)$ satisfying that
\begin{align*}
& {P}^{A|E}(a|e) 
\le 
\tilde{P}^{A|E}(a|e)
\le \frac{1}{|A|} , \\
& \sum_{(a,e)\notin \Omega}
(\tilde{P}^{A|E}(a|e)-{P}^{A|E}(a|e)) \\
= &
\sum_{(a,e)\in \Omega}
({P}^{A|E}(a|e)-\tilde{P}^{A|E}(a|e)).
\end{align*}
Then, $d(\tilde{P}^{A,E},P^{A,E}) \le 2 \epsilon$.
Since 
\begin{align*}
{H}_{\min}(A|E|\tilde{P}^{A,E}) 
\ge - \log 
p_{\max}^{A|E},
\end{align*}
%\max_{(a,e)\in \Omega}  P^{A|E}(a|e)$,
we have
\begin{align*}
\hat{H}_{\min}^{2 \epsilon}(A|E|P^{A,E})
\ge {H}_{\min}^{\epsilon}(A|E|P^{A,E}).
\end{align*}

When $\tilde{P}^{A,E}$ satisfies the condition given in (\ref{4-16-1}),
\begin{align*}
& |
(H(E |P^{A,E}) -H(f_{\bX}(A) E|P^{A,E})) \\
&-
(H(E |\tilde{P}^{A,E}) -H(f_{\bX}(A) E|\tilde{P}^{A,E}))
| \\
\le &
2 \eta( \epsilon, \log  |{\cal A}| \cdot M).
\end{align*}
Hence, 
\begin{align*}
& \rE_{\bX} 
I(f_{\bX}(A):E|P^{A,E}) \\
\le &
 \rE_{\bX} 
\log M + H(E |P^{A,E}) -H(f_{\bX}(A) E|P^{A,E})\\
\le &
 \rE_{\bX} 
\log M + H(E |\tilde{P}^{A,E}) -H(f_{\bX}(A) E|\tilde{P}^{A,E})\\
& + 2 \eta( \epsilon, \log  |{\cal A}| \cdot M) \\
\le &
M e^{-H_{\min}(A|E|\tilde{P}^{A,E})} 
+ 2 \eta( \epsilon,\log  |{\cal A}| \cdot M )\\
\le &
M e^{-\hat{H}_{\min}^{\epsilon}(A|E|{P}^{A,E})} 
+ 2 \eta( \epsilon,\log  |{\cal A}| \cdot M )\\
\le &
M e^{-{H}_{\min}^{\epsilon/2}(A|E|{P}^{A,E})} 
+ 2 \eta( \epsilon,\log  |{\cal A}| \cdot M ).
\end{align*}
Thus, we obtain an alternative bound of
$\rE_{\bX} I(f_{\bX}(A):E|P^{A,E})$ as follows.
\begin{align}
& \rE_{\bX} 
I(f_{\bX}(A):E|P^{A,E}) \nonumber \\
\le &
\overline{I}_{\min,M}(A|E|{P}^{A,E}) \nonumber \\
:= &
\min_{1/4> \epsilon >0}
M e^{-{H}_{\min}^{\epsilon}(A|E|{P}^{A,E})} 
+ 2 \eta( 2\epsilon,\log  |{\cal A}| \cdot M )\nonumber \\
\le & \min_{R' \ge \log 4|{\cal A}|}
M e^{-R'} \nonumber \\
& + 2 \eta(2 P^{A,E}\{ P^{A|E}(a|e) \ge e^{-R'} \},\log  |{\cal A}| \cdot M ).
\Label{4-16-2}
\end{align}
Using (\ref{4-16-2}),
we can evaluate $e_I(P^{AE}|R)$ as follows.
\begin{align*}
e_I(P^{AE}|R)
\ge 
\lim_{n\to \infty}
\frac{-1}{n}\log
\overline{I}_{\min,e^{nR}}(A|E|({P}^{A,E})^n) 
\end{align*}
Cram\'{e}r Theorem yields that
\begin{align*}
& \lim_{n\to \infty}
\frac{-1}{n}\log
(P^{A,E})^n \{ (P^{A|E})^n (a|e) \ge e^{-n R'} \} \\
= &
\max_{s \ge 0}\tilde{H}_{1+s} (A|E|{P}^{A,E}) -sR'.
\end{align*}
Thus,
\begin{align*}
 \lim_{n\to \infty}
\frac{-1}{n}\log P_n(R') 
=
\max_{s \ge 0} \tilde{H}_{1+s} (A|E|{P}^{A,E}) -sR'.
\end{align*}
where
\begin{align*}
&P_n(R')\\
:=&
\eta(
2 (P^{A,E})^n \{ (P^{A|E})^n (a|e) \ge e^{-n R'} \}
,\log  |{\cal A}|^n e^{nR} ) .
\end{align*}
Therefore,
\begin{align*}
& \lim_{n\to \infty}
\frac{-1}{n}\log
\overline{I}_{\min,e^{nR}}(A|E|({P}^{A,E})^n)  \\
=&
\max_{R': R'\ge R}
\min\{
\max_{s \ge 0}
\tilde{H}_{1+s} (A|E|{P}^{A,E}) -sR'
,R' -R\}.
\end{align*}
$\max_{s \ge 0} \tilde{H}_{1+s} (A|E|{P}^{A,E}) -sR'$ 
is continuous and monotone decreasing concerning $R'$
and 
$R' -R$ 
is continuous and monotone increasing concerning $R'$.
Thus,
the above maximum is attained when 
$\max_{s \ge 0} \tilde{H}_{1+s} (A|E|{P}^{A,E}) -sR'=R' -R$.
Let $s_0$ be the parameter $s$ attaining the above.
Then, 
$\tilde{H}_{1+s_0} (A|E|{P}^{A,E}) -s_0R'= R'-R$
and $\frac{d}{ds }\tilde{H}_{1+s_0} (A|E|{P}^{A,E})|_{s=s_0}= R'$.
Thus,  
\begin{align}
& \max_{R': R'\ge R}
\min\{
\max_{s \ge 0} \tilde{H}_{1+s} (A|E|{P}^{A,E}) -sR',
R' -R\} \nonumber \\
= &
\frac{1}{1+s_0} \tilde{H}_{1+s_0} (A|E|{P}^{A,E}) 
-\frac{s_0}{1+s_0}R \nonumber \\
= &
\max_{s \ge 0} \frac{1}{1+s}  \tilde{H}_{1+s} (A|E|{P}^{A,E}) 
- \frac{s}{1+s}R \Label{n4-16-6} \\
= &
\max_{s \ge 0} \frac{s}{1+s}  
(H_{1+s} (A|E|{P}^{A,E}) - R),
\Label{4-16-6}
\end{align}
where the equation (\ref{n4-16-6}) can be checked by taking the derivative.
This value is smaller than the bound given by (\ref{4-16-4}).
One might want to apply the formula
\begin{align*}
H_{\min}^\epsilon(A) \ge
H_{1+s}(A) +\frac{\log \epsilon}{s}
\end{align*}
given by Renner and Wolf\cite{RS2}
to the evaluation of $ \overline{I}_{\min,M}(A|E|{P}^{A,E})$.
However, this application does not simplify our derivation.
So, we do not apply this formula.

\section{The wire-tap channel in a general framework}\Label{s2}
Next, we consider the wire-tap channel model, in which
the eavesdropper (wire-tapper), Eve 
and the authorized receiver Bob
receive information from the authorized sender Alice.
In this case, in order for
Eve to have less information,
Alice chooses a suitable encoding.
This problem is formulated as follows.
Let $\cY$ and $\cZ$ be the probability spaces of 
Bob and Eve,
and $\cX$ be the set of alphabets sent by Alice.
Then, the main channel from Alice to Bob
is described by $W^B:x \mapsto W^B_x$,
and the wire-tapper channel from Alice to 
Eve is described by $W^E:x \mapsto W^E_x$.
In this setting,
Alice chooses $M$ 
distributions $Q_1, \ldots, Q_M$ on $\cX$,
and she generates $x\in \cX$ subject to $Q_i$
when she wants to send the message $i \in \{1, \ldots, M\}$.
Bob prepares $M$ disjoint subsets
$\cD_1,\ldots, \cD_M$ of $\cY$ and 
judges that a message is $i$ if $y$ belongs to $\cD_i$.
Therefore, the triplet $(M,\{Q_1, \ldots, Q_M\},
\{\cD_1,\ldots, \cD_M\})$ is called a
code, and is described by $\Phi$.
Its performance is given by the following three quantities.
The first is the size $M$, which is denoted by $|\Phi|$.
The second is the average 
error probability $\epsilon_B(\Phi)$:
\begin{align*}
\epsilon_B(\Phi)\defeq
\frac{1}{M} \sum_{i=1}^M  W_{Q_i}^B (\cD_i^c),
\end{align*}
and the third is Eve's information
regarding the transmitted message $I_E(\Phi)$:
\begin{align*} 
I_E(\Phi) \defeq \sum_i \frac{1}{M} D(  W_{Q_i}^E\| W^E_{\Phi}),\quad
W^E_{\Phi}  \defeq \sum_i \frac{1}{M} W_{Q_i}^E.
\end{align*}
In order to calculate these values, 
we introduce the following quantities.
\begin{align*}
\phi(s|W,p) &:= \log 
\sum_y \left( \sum_x p(x)
(W_x(y))^{1/(1-s)}\right)^{1-s} \\
\psi(s|W,p) &:= \log 
\sum_y \left( \sum_x p(x)
(W_x(y))^{1+s}\right)W_p(y)^{-s},
\end{align*}
where
$W_p(y):= \sum_x p(x)W_x(y)$.
The following lemma gives the properties of these quantities.

\begin{lem}\red{[13]}\Label{l1}
The function $p \mapsto 
e^{\phi(s|W,p)}$ 
is convex for $s\in [-1,0]$, and is concave for $s\in [0,1]$.
\end{lem}
\begin{proof}
\red{The convexity and concavity of $p \mapsto 
e^{\phi(s|W,p)}$ follow} from the convexity and concavity of
$x^{1-s}$ for the respective parameter $s$.
\end{proof}

Now, using the functions $\phi(s)$ and $\psi(s)$,
we make a code for the wire-tap channel based on the random coding method.
For this purpose, we make a protocol to share a random number.
First, we generate the random code $\Phi(\bY)$ with size $LM$,
which is described by the
$LM$ independent and identical random
variables $\bY$ subject to the distribution $p$ on $\cX$.
For integers \red{$k=1, \ldots, LM$} 
let \red{$\cD_{k}'(\bY)$} be the maximum likelihood decoder
of the code $\Phi(\bY)$.
Gallager \cite{Gal} 
showed that the ensemble expectation of the average error 
probability 
concerning decoding the input message $A$ is less than
$(ML)^{s}e^{\phi(-s|W^B,p)}$ for $0 \le s \le 1$.
Here, we choose a function 
$f_{\bX}$ from a function ensemble 
$\{f_{\bX}\}$ 
satisfying Conditions \ref{C1} and \ref{C12}.
After sending the random variable $A$ taking values in the set with the cardinality $ML$, 
Alice and Bob apply the function $f_{\bX}$ 
to the random variable $A$ and generate another piece of data of size $M$.
Then, Alice and Bob 
share random variable $f_{\bX}(A)$ with size $M$.
This protocol is denoted by $\Phi(\bX,\bY)'$ 

Let $E$ be the random variable of the output of Eve's channel $W^E$,
and
$f_{\Phi(\bY)}$
be the map defined by the code $\Phi(\bY)$
from the message space $\{1, \ldots, ML\}$ to $\cX$.
\red{Then as is shown in Appendix \ref{a4}, 
we obtain 
\begin{align}
 \rE_{\bY} E_{\bX|\bY} I_E(\Phi(\bX,\bY)') 
\le 
\frac{1}{s L^s} e^{\psi(s|W,p)}
\quad 0 < s \le 1.
\Label{2-25-1}
\end{align}
}
Now, we make a code for wire-tap channel by modifying the above protocol $\Phi(\bX,\bY)'$.
First, we choose the distribution $Q_i$ to be the uniform distribution on $f_{\bX}^{-1}\{i\}$.
When Alice wants to send the message $i$,
before sending the random variable $A$, 
Alice generates the random number $A$ subject to the distribution $Q_i$.
Alice sends the random variable $A$.
Bob recovers the random variable $A$ and Applies the function $f_{\bX}$.
Then, Bob decodes Alice's message $i$, and this code for wire-tap channel $W^B,W^E$
is denoted by $\Phi(\bX,\bY)$.
Since Condition \ref{C12} guarantees that
the cardinality $|f_{\bX}^{-1}\{i\}|$ does not depend on $i$,
the protocol $\Phi(\bX,\bY)$ has the same performance as the above protocol $\Phi(\bX,\bY)'$.

Finally, 
we consider what code is derived from the above random coding discussion.
Using the Markov inequality, we obtain
\begin{align*}
\rP_{\bX,\bY} 
\{ \epsilon_B(\Phi(\bX,\bY)) \le 2 \rE_{\bX,\bY} \epsilon_B(\Phi(\bX,\bY)) \}^c
&\,< \frac{1}{2} \\
\rP_{\bX,\bY} 
\{ I_E(\Phi(\bX,\bY)) \le 2 \rE_{\bX,\bY} I_E(\Phi(\bX,\bY))\}^c
&\,< \frac{1}{2} .
\end{align*}
Therefore, the existence of a good code is guaranteed in the following way.
That is, we give the concrete performance of a code 
whose existence is shown in the above random coding method.

\begin{thm}\Label{3-6}
There exists a code $\Phi$ for any
integers $L,M$,
and any probability distribution $p$ on $\cX$
such that
\begin{align}
|\Phi| &=M \nonumber \\
\epsilon_B(\Phi) & \le 2 \min_{0\le s\le 1}(ML)^{s}e^{\phi(-s|W^B,p)}
\Label{3-8-1}\\
I_E(\Phi) & \le  2
%\min_{0 > t \ge -1/2}\frac{L^t e^{\phi(t|W^E,p)}}{-t},
\min_{0 \le s \le 1}
\frac{e^{\psi(s|W^E,p)}}{L^s s}.
\Label{7-1-2} 
\end{align}
\end{thm}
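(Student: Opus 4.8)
The plan is to instantiate the random-coding construction developed just above the statement and then extract a deterministic code by a union bound. First I fix the integers $L,M$ and the distribution $p$ on $\cX$, and generate the random code $\Phi(\bY)$ of size $LM$ whose $LM$ codewords are i.i.d.\ according to $p$, equipped with its maximum-likelihood decoder. Gallager's bound \cite{Gal} gives $\rE_{\bY}\,\epsilon_B(\Phi(\bY)) \le (ML)^s e^{\phi(-s|W^B,p)}$ for every $s\in[0,1]$ at once, so $\rE_{\bY}\,\epsilon_B \le \min_{0\le s\le 1}(ML)^s e^{\phi(-s|W^B,p)}$; since the privacy-amplification step does not affect Bob's decoding (Bob first recovers $A$ from $\cD_{l,m}'(\bY)$, then applies $f_{\bX}$), the same bound holds for $\rE_{\bX,\bY}\,\epsilon_B(\Phi(\bX,\bY))$.

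Next I control Eve's information. For a fixed realization of $\bY$, choosing $f_{\bX}$ from an ensemble satisfying Conditions \ref{C1} and \ref{C12}, I apply inequality (\ref{4-27-4}) to the joint distribution $P^{A,E}(a,e)=p'(a)\,W^E_{f_{\Phi(\bY)}(a)}(e)$ with key size $M$; combined with the identity (\ref{12-26-1}) this yields (\ref{4-27-8-1}), namely $\rE_{\bX|\bY}\, I_E(\Phi(\bX,\bY)') \le e^{\psi(s|W^E,p_{\Phi(\bY)})}/(L^s s)$. Taking the expectation over $\bY$, using $\rE_{\bY}\,p_{\Phi(\bY)}=p$ together with the concavity of $p\mapsto e^{\psi(s|W^E,p)}$ from Lemma \ref{l1}(2), gives $\rE_{\bX,\bY}\, I_E(\Phi(\bX,\bY)') \le e^{\psi(s|W^E,p)}/(L^s s)$ for every $s\in(0,1]$, hence the minimum over $s$. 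Condition \ref{C12} guarantees that the wire-tap code $\Phi(\bX,\bY)$ (with $Q_i$ uniform on $f_{\bX}^{-1}\{i\}$) has exactly the same error probability and leakage as $\Phi(\bX,\bY)'$, so the same bound on $I_E$ transfers to it.

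Finally I derandomize. By the Markov inequality, $\rP_{\bX,\bY}\{\epsilon_B(\Phi(\bX,\bY)) > 2\,\rE_{\bX,\bY}\,\epsilon_B\} < 1/2$ and $\rP_{\bX,\bY}\{I_E(\Phi(\bX,\bY)) > 2\,\rE_{\bX,\bY}\, I_E\} < 1/2$. Because both probabilities are strictly below $1/2$, their union has probability strictly below $1$, so there is a set of positive probability of pairs $(\bX,\bY)$ on which both events fail simultaneously. Any such pair gives a concrete code $\Phi$ with $|\Phi|=M$ satisfying (\ref{3-8-1}) and (\ref{7-1-2}), which proves the theorem.

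The main obstacle is bookkeeping rather than conceptual. One has to be careful that the $\min_s$ may be pulled outside each expectation — legitimate only because the per-$s$ estimate holds for the \emph{same} random variable uniformly in $s$ — and that the leakage bound survives the passage from the key-distillation protocol $\Phi(\bX,\bY)'$ to the wire-tap code $\Phi(\bX,\bY)$, which is precisely where Condition \ref{C12} is invoked. The combination of the two Markov estimates also relies essentially on each being a \emph{strict} inequality, so that their complements cannot jointly exhaust the probability space; this is the only reason the factor $2$ suffices.
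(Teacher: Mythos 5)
Your proof is correct and follows essentially the same route as the paper: random coding with i.i.d.\ codewords and ML decoding for Bob, Gallager's bound for $\epsilon_B$, application of (\ref{4-27-4}) together with (\ref{12-26-1}) and the concavity of $e^{\psi(s|W^E,\cdot)}$ from Lemma \ref{l1} for $I_E$, the reduction from $\Phi(\bX,\bY)'$ to $\Phi(\bX,\bY)$ via Condition \ref{C12}, and finally a two-way Markov/union bound to extract a single code. The remarks you add about pulling $\min_s$ outside the expectation and the strictness of the Markov bound are accurate and match the paper's implicit reasoning.
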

In fact, Hayashi \cite{Hayashi} proved a similar result when 
the right hand side of (\ref{7-1-2}) is replaced by
$2 \min_{0 \le s \le 1/2}\frac{e^{\phi(s|W^E,p)}}{L^s s}$.

%\begin{rem}
%In the above derivation, 
%the concavity of $e^{\psi(s|W^E,p)}$ concerning $p$ is essential.
%\end{rem}

\vspace{2ex}
In the $n$-fold discrete memoryless channels $W^{B_n}$ and $W^{E_n}$ 
of the channels $W^B$ and $W^E$,
the additive equation
$\phi(s|W^{B_n},p)= n \phi(s|W^B,p)$ 
holds.
Thus, there exists a code $\Phi_n$ for any
integers $L_n,M_n$,
and any probability distribution $p$ on $\cX$
such that
\begin{align}
|\Phi_n| &=M_n \nonumber \\
\epsilon_B(\Phi) & \le 2
\min_{0\le s\le 1}
(M_n L_n)^{s}e^{n \phi(-s|W^B,p)}
 \nonumber
\\
I_E(\Phi_n) & \le  2
%\min_{0 > t \ge -1/2}\frac{L^t e^{\phi(t|W^E,p)}}{-t},
\min_{0 \le s \le 1}
\frac{e^{n \psi(s|W^E,p)}}{L_n^s s}.
\Label{7-1-2-a} 
\end{align}
Since 
$\lim_{s \to 0} \frac{\psi(s|W^{E},p)}{s}= I(p:W^E)$,
the rate $\max_p I(p:W^B)-I(p:W^E)$ can be asymptotically 
attained.

When the sacrifice information rate is $R$, i.e., $L_n\cong e^{nR}$,
the decreasing rate of Eve's information is 
greater than
$e_{\psi}(R|W^E,p):=\max_{0\le s\le1} s R-\psi(s|W^E,p)$.
Hayashi \cite{Hayashi} derived 
another lower bound of this exponential decreasing rate
$e_{\phi}(R|W^E,p):=\max_{0 \le s \le 1/2} s R-\phi(s|W^E,p)$.

\section{Comparison with existing bound}\Label{s6}
Now, we compare the two upper bounds $\frac{e^{\psi(s|W^E,p)}}{L^s s}$ and 
$\frac{e^{\phi(s|W^E,p)}}{L^s s}$ \red{for $0 < s \le 1$}.
H\"{o}lder inequality
with the measurable space $({\cal X},p)$ is given as
\begin{align*}
& |\sum_{x\in {\cal X}} p(x) X(x)Y(x)| \\
\le &
(\sum_{x\in {\cal X}} p(x) |X(x)|^{\frac{1}{1-s}})^{1-s}
(\sum_{x\in {\cal X}} p(x) |Y(x)|^{\frac{1}{s}})^{s}.
\end{align*}
Using this inequality, we obtain
\begin{align*}
& %\sum_y 
%\left( 
\sum_x p(x)
(W_x(y))^{1+s} %\right)
W_p(y)^{-s}\\
= &
% \sum_y 
%\left( 
\sum_x p(x)
W_x(y)
(\frac{W_x(y)}{W_p(y)})^{s}
%\right) 
\\
\le &
%\sum_y 
\left( \sum_x p(x)
(W_x(y))^{\frac{1}{1-s}}\right)^{1-s}
\left( \sum_x p(x)
\frac{W_x(y)}{W_p(y)}
\right)^{s} \\
= &
%\sum_y 
\left( \sum_x p(x)
(W_x(y))^{\frac{1}{1-s}}
\right)^{1-s}.
\end{align*}
Taking the summand concerning $y$, 
we obtain
\begin{align}
e^{\psi(s|W^E,p)} \le e^{\phi(s|W^E,p)}.
\Label{2-26-1}
\end{align}
That is, our upper bound is better than that given by \cite{Hayashi}.
Thus, $e_{\psi}(R|W^E,p)\ge e_{\phi}(R|W^E,p)$.

Next, 
in order to consider the case 
when the privacy amplification rate $R$ is close to the mutual information
$I(p:W)$, 
we treat the difference between these bounds
with the limit $s \to 0$.
In this case, we take their Taylor expansions as
follows.
\begin{align*}
& \sum_{x,y}
p_x W_x(y)^{1+s}W_p(y)^{-s} \\
\cong &
1+ I(p:W)s +I_2(p:W)s^2 +I_3(p:W)s^3 \\
&
\sum_{y} \left(\sum_{x}p_x W_x(y)^{\frac{1}{1-s}}\right)^{1-s} \\
\cong &
1+ I(p:W)s +I_2(p:W)s^2 +
(I_3(p:W)+\tilde{I}_3(p:W))s^3 ,
\end{align*}
where
\begin{align*}
I_2(p:W)&:=
\frac{1}{2}
\sum_{x,y}p_x W_x(y)
(\log W_x(y)-\log W_p(y))^2\\
I_3(p:W)&:=
\frac{1}{6}
\sum_{x,y}p_x W_x(y)
(\log W_x(y)-\log W_p(y))^3\\
\tilde{I}_3(p:W)&:=
\frac{1}{2}\sum_y
\Bigl(
\sum_x p_x W_x(y)(\log W_x(y))^2  \\
&\hspace{10ex} - \frac{(\sum_x p_x W_x(y)\log W_x(y))^2}{W_p(y)}
\Bigr).
\end{align*}
Indeed, applying the Schwarz inequality 
to the inner product $\langle f,g \rangle:=
\sum_x p_x W_x(y) f(y) g(y)$, we obtain
\begin{align*}
&(\sum_x p_x W_x(y)(\log W_x(y))^2 )
\cdot 
(\sum_x p_x W_x(y) ) \\
\ge &
(\sum_x p_x W_x(y)\log W_x(y))^2.
\end{align*}
Since
$\sum_x p_x W_x(y)=W_p(x)$,
this inequality implies that 
$\tilde{I}_3(p:W) \ge 0$.
That is,
$e^{\psi(s|W^E,p)}$ is smaller than 
$e^{\phi(s|W^E,p)}$ only in the third order when $s$ is small.

Next, we consider a more specific case.
A channel $W^E$ is called {\it additive}
when there exists a distribution such that 
$W^E_x(z)=P(z-x)$.
In this case,
$\frac{e^{\psi(s|W^E,p)}}{L^s s}$ can be simplified as follows.
When ${\cal X}={\cal Z}$ and ${\cal X}$ is a module and $W_{x}(z)=W_{0}(z-x)
=P(z-x)
$, the channel $W$ is called additive.
\red{The quantities $e_{\psi}(R|W^E,p_{\mix})$ and $e_{\phi}(R|W^E,p_{\mix})$ are characterized as follows.}
Since
\begin{align}
e^{\psi(s|W^E,p_{\mix})}
&=
|{\cal X}|^{s} e^{-\tilde{H}_{1+s}(X|P)}\Label{12-26-2}\\
e^{\phi(t|W^E,p_{\mix})}
&=
|{\cal X}|^{t} e^{- (1-t)\tilde{H}_{1+\frac{t}{1-t}}(X|P)},
\Label{12-26-2-b}
\end{align}
we obtain 
\begin{align*}
& e_{\psi}(R|W^E,p_{\mix})
=\max_{0 \le s \le 1}
s (R-\log |{\cal X}|)
+ \tilde{H}_{1+s}(X|P) \\
=& \max_{0 \le s \le 1}
s (R-\log |{\cal X}|+ H_{1+s}(X|P) ) \\
\ge &
\max_{0 \le s \le 1}
\frac{s (R-\log |{\cal X}|)+ \tilde{H}_{1+s}(X|P) }{1+s} 
\\
=& \max_{0 \le s \le 1}
\frac{s(R-\log |{\cal X}|+ H_{1+s}(X|P) )}{1+s} 
=
e_{\phi}(R|W^E,p_{\mix}),
\end{align*}
where $t=\frac{s}{1+s}$.
Fig. \ref{f1} shows the comparison of 
$e_{\psi}(R|W^E,p_{\mix})$ and $e_{\phi}(R|W^E,p_{\mix})$
with $e_{\psi,2}(R|W^E,p_{\mix}):=  (R-\log |{\cal X}|)+ H_{2}(X|P) $,
which is directly obtained from Bennett et al\cite{BBCM}.
When $R-\log |{\cal X}| \ge -\frac{d}{ds} \tilde{H}_{1+s}(X|P)|_{s=1}$,
$e_{\psi}(R|W^E,p_{\mix})= e_{\psi,2}(R|W^E,p_{\mix})$.

\begin{figure}[htbp]
\begin{center}
\scalebox{1.0}{\includegraphics[scale=0.3]{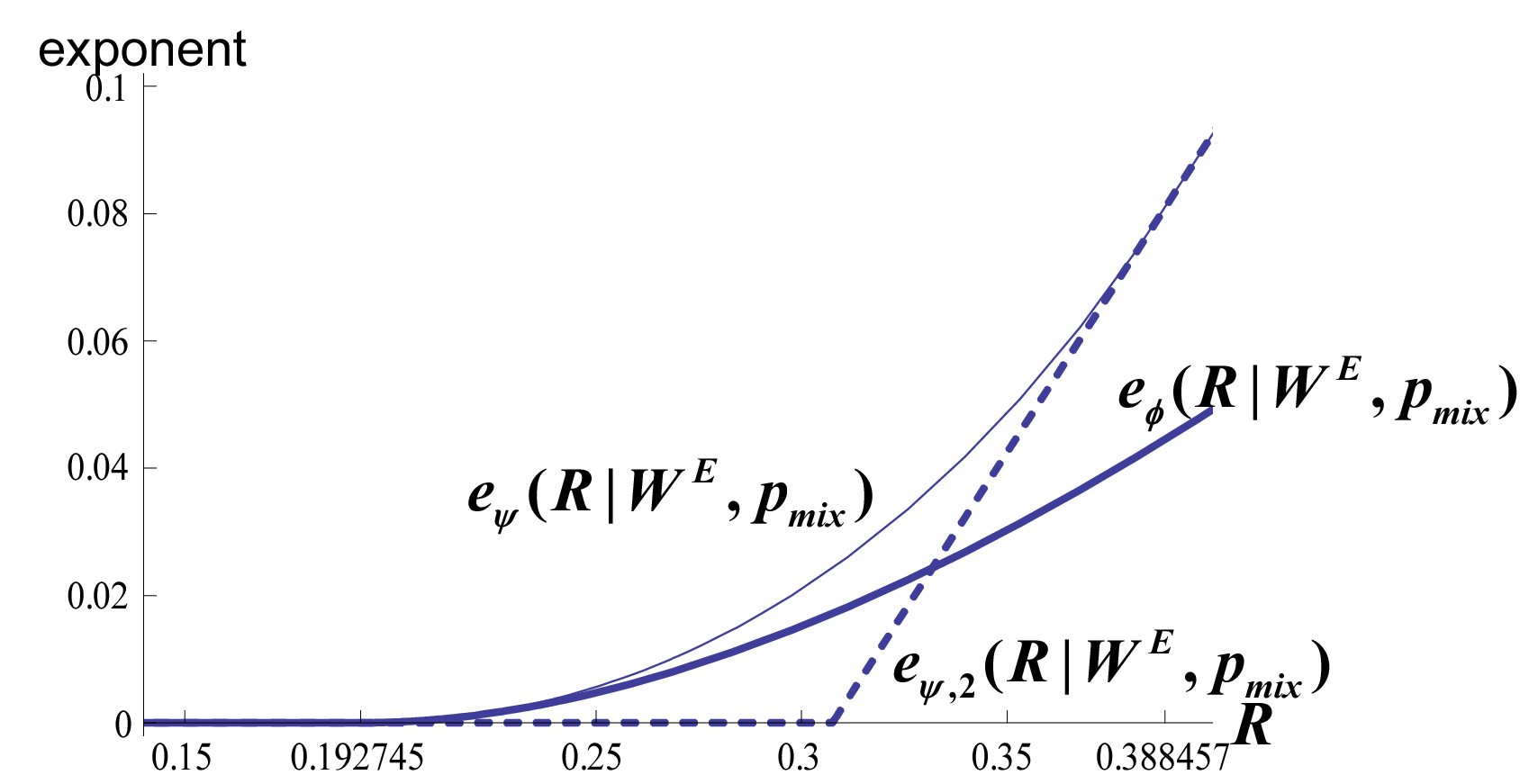}}
\end{center}
\caption{Normal line: $e_{\psi}(R|W^E,p_{\mix})$ (The present paper),
Thick line: $e_{\phi}(R|W^E,p_{\mix})$ (Hayashi\cite{Hayashi}),
Dashed line: $e_{\psi,2}(R|W^E,p_{\mix})$ (Bennett et al\cite{BBCM}).
$p=0.2$, $\log 2 - h(p)=0.192745$,
$\log |{\cal X}| -\frac{d}{ds} \tilde{H}_{1+s}(X|P)|_{s=1}=0.388457$.}
\Label{f1}
\end{figure}%

Next, we consider a more general case.
Eve is assumed to have two random variables $z \in {\cal X}$ and $z'$.
The first random variable $z$ is the output of an additive channel 
depending on the second variable $z'$.
That is, the channel $W_x^E(z,z')$ can be written as
$W_x^E(z,z')=P^{X,Z'}(z-x,z')$, where $P^{X,Z'}$ is a joint distribution.
Hereinafter, this channel model is called a general additive channel.
This channel is also called 
a regular channel\cite{DP}.
For this channel model,
the inequality $e_{\psi}(R|W^E,p_{\mix})\ge e_{\phi}(R|W^E,p_{\mix})$
holds because
\begin{align}
e^{\psi(s|W^E,p_{\mix})}
&=
|{\cal X}|^{s} e^{-\tilde{H}_{1+s}(X|Z'|P^{X,Z'})}\Label{12-26-3}\\
e^{\phi(t|W^E,p_{\mix})}
&=
|{\cal X}|^{t} e^{- (1-t)\tilde{H}_{1+\frac{t}{1-t}}(X|Z'|P^{X,Z'})}.\nonumber
\end{align}

\section{Wire-tap channel with linear coding}\Label{s3}
In a practical sense, 
we need to take into account the decoding time.
For this purpose, we often restrict our codes to linear codes.
In the following, we consider the case where 
the sender's space $\cX$ has the structure of a module.
First, we regard a submodule $C_1\subset \cX$ as an encoding for
the usual sent message,
and focus on its decoding $\{\cD_x\}_{x\in C_1}$ by the authorized receiver.
We construct a code for a wire-tap channel
$\Phi_{C_1,C_2}= (|C_1/C_2|,
\{Q_{[x]}\}_{[x]\in C_1/C_2},
\{\cD_{[x]}\}_{[x]\in C_1/C_2})$
based on a submodule $C_2$ of $C_1$ as follows.
The encoding $Q_{[x]}$ is given as the uniform distribution 
on the coset $[x]:=x+C_2$, and 
the decoding $\cD_{[x]}$ is given as the subset 
$\cup_{x'\in x+C_2} \cD_{x'}$.
Next, we assume that a submodule 
$C_2(\bX)$ of $C_1$ with cardinality $|C_2(\bX)|=L$
is generated by a random variable $\bX$ satisfying the following condition.

\begin{condition}\Label{C2}
Any element $x \neq 0 \in C_1$
is included in $C_2(\bX)$ with probability at most 
$\frac{L}{|C_1|}$.
\end{condition}

Then, the performance of the constructed code is evaluated by the following theorem.
\begin{thm}
Choose the subcode $C_2(\bX)$ according to Condition \ref{C2}.
We construct the code $\Phi_{C_1,C_2(\bX)}$ by choosing the distribution 
$Q_{[x]}$ to be the uniform distribution on $[x]$ for $[x]\in C_1/C_2(\bX)$.
Then, we obtain
\begin{align}
\rE_{\bX} I_E(\Phi_{C_1,C_2(\bX)}) 
\le &
\frac{e^{\psi(s|W^E,P_{\mix, C_1})}}{L^s s}
\quad 0 < \forall s <1,\Label{4-27-1}
\end{align}
where $P_{\mix,S}$ is the uniform distribution on the subset $S$.
\end{thm}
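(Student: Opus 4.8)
The plan is to reduce this linear-coding theorem to Theorem~\ref{thm1} (equivalently, to inequality (\ref{4-27-4})) together with the concavity of $e^{\psi(s\vert W^E,p)}$ from Lemma~\ref{l1}, essentially mimicking the derivation of (\ref{4-27-8}) but with the random subcode $C_2(\bX)$ playing the role of the random hash function $f_{\bX}$. First I would observe that, once $C_1$ is fixed, the code $\Phi_{C_1,C_2(\bX)}$ sends the random message $A$ uniformly distributed on $C_1$ and then announces the coset $[A]\in C_1/C_2(\bX)$; Eve's information $I_E(\Phi_{C_1,C_2(\bX)})$ is exactly the leaked information $I(g_{\bX}(A):E\vert P^{A,E})$ for the quotient map $g_{\bX}\colon C_1\to C_1/C_2(\bX)$, where $P^{A,E}(a,e)=P_{\mix,C_1}(a)W^E_a(e)$ and $M=\vert C_1/C_2(\bX)\vert = \vert C_1\vert /L$.

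The key step is to check that the ensemble $\{g_{\bX}\}$ of quotient maps is universal$_2$ in the sense of Condition~\ref{C1}. For $a_1\neq a_2$ in $C_1$ one has $g_{\bX}(a_1)=g_{\bX}(a_2)$ iff $a_1-a_2\in C_2(\bX)$, and since $a_1-a_2\neq 0$, Condition~\ref{C2} bounds this probability by $L/\vert C_1\vert = 1/M$, which is exactly the universal$_2$ property. Hence (\ref{4-27-4}) applies with this $M$ and this $P^{A,E}$, giving
\begin{align*}
\rE_{\bX}I_E(\Phi_{C_1,C_2(\bX)})
\le \frac{M^s e^{-\tilde H_{1+s}(A\vert E\vert P^{A,E})}}{s},
\quad 0<s\le 1.
\end{align*}
Now I would invoke the identity (\ref{12-26-1})/(\ref{12-26-2}): since $A$ is uniform on $C_1$ and $M=\vert C_1\vert/L$, we get $M^s e^{-\tilde H_{1+s}(A\vert E\vert P^{A,E})} = e^{\psi(s\vert W^E,P_{\mix,C_1})}/L^s$, which plugged into the previous display yields the stated bound (\ref{4-27-1}). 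Note that unlike in Section~\ref{s2}, no application of Condition~\ref{C12} or of the concavity of $e^{\psi}$ is needed here, because the input distribution $P_{\mix,C_1}$ is already fixed and uniform on $C_1$; the bound is stated for this particular $p=P_{\mix,C_1}$ rather than being pushed through an averaging over $\bY$.

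The main obstacle — really the only nontrivial point — is the bookkeeping that makes the quotient map $g_{\bX}$ literally fit the hypotheses of (\ref{4-27-4}): one must confirm that $\Phi_{C_1,C_2(\bX)}$ with $Q_{[x]}$ uniform on the coset $[x]$ produces precisely the joint distribution $P^{A,E}(a,e)=\frac{1}{\vert C_1\vert}W^E_a(e)$ on (message$\times$Eve's output) after the coset $[A]$ is revealed, so that $I_E$ equals $I(g_{\bX}(A):E\vert P^{A,E})$, and that the cosets all have equal size $L$ (automatic since $C_2(\bX)$ is a submodule), so that $Q_{[x]}$ is well-defined and the size is $M=\vert C_1\vert/L$. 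Once this identification is in place, the chain (\ref{4-27-4}) $\Rightarrow$ (\ref{12-26-1}) $\Rightarrow$ (\ref{4-27-1}) is immediate. I would also remark that taking $s\to 0$ recovers the achievability of rate $I(P_{\mix,C_1}:W^B)-I(P_{\mix,C_1}:W^E)$ with linear codes, which is the point of restricting to the module setting.
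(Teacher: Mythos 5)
Your proof is correct and follows essentially the same route as the paper: both reduce the claim to inequality (\ref{4-27-4}) by observing that the quotient map $C_1 \to C_1/C_2(\bX)$ is a universal$_2$ hash family (Condition~\ref{C2} gives exactly the $1/M$ collision bound of Condition~\ref{C1} for $M=|C_1|/L$), and then substitute the identity relating $\tilde H_{1+s}(A|E|P^{A,E})$ under $A\sim P_{\mix,C_1}$ to $\psi(s|W^E,P_{\mix,C_1})$ to arrive at $e^{\psi(s|W^E,P_{\mix,C_1})}/(L^s s)$. Your write-up is somewhat more explicit than the paper's one-line proof (which merely asserts ``the choice of $Q_{[x]}$ corresponds to a hashing operation satisfying Condition~\ref{C1}''), and your side remark that neither Condition~\ref{C12} nor the concavity of $e^{\psi}$ is needed at this step is also accurate.
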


\begin{proof}
This inequality can be shown by (\ref{4-27-4}) as follows.
Now, we define 
the joint distribution 
$P(x,z):= P_{\mix, C_1}(x)W^E_{x}(z)$.
The choice of $Q_{[x]}$ corresponds to 
a hashing operation satisfying Condition \ref{C1}.
Then, 
(\ref{4-27-4}) yields that
$\rE_{\bX} I_E(\Phi_{C_1,C_2(\bX)}) $
is bounded by 
$\frac{|C_1|^s
\sum_{x,z} P(z,x)^{1+s}P(z)^{-s}
}{\red{L^s} s}
= \frac{e^{\psi(s|W^E,P_{\mix, C_1})}}{L^s s}$,
which implies (\ref{4-27-1}).
\end{proof}

\red{Next, we assume that a submodule 
$C_1(\bY)$ of with cardinality $|C_1(\bY)|=ML$
is generated by a random variable $\bY$ satisfying the following condition.}

\red{\begin{condition}\Label{C4}
The relation $|C_1(\bY)|=ML$ always holds.
Any element $x \neq 0 \in {\cal X}$
is included in $C_1(\bY)$ with probability at most 
$\frac{M}{|{\cal X}|}$.
\end{condition}}

\red{Choose the subcode $C_1(\bY)$ and $C_2(\bX)$ according to Conditions \ref{C4} and \ref{C2}.
Then, as is shown in Appendix \ref{a5}, we obtain
\begin{align}
\rE_{\bX,\bY} I_E(\Phi_{C_1(\bY),C_2(\bX)}) 
\le &
\frac{e^{\psi(s|W^E,P_{\mix, {\cal X}})}}{L^s s},
\quad 0 < \forall s <1.\Label{4-27-2-c}
\end{align}
}

Next, we consider a special class of channels.
When the channel $W^E$ is additive, i.e., $W^E_x(z)=P(z-x)$,
\red{
(\ref{12-26-2}) implies 
\begin{align}
\rE_{\bX,\bY} I_E(\Phi_{C_1(\bY),C_2(\bX)}) 
\le &
\frac{|{\cal X}|^s e^{-\tilde{H}_{1+s}(X|P)}}{L^s s} 
\end{align}
for $0 < \forall s \le 1$.
In this case,
the equation 
$\psi(s|W^E,P_{\mix, C_1+x})= \psi(s|W^E,P_{\mix, C_1})$ 
holds for any $x$.
Thus, 
(\ref{2-26-1}) and the concavity of $e^{\phi(s|W^E,p)}$ (Lemma \ref{l1})
imply that 
\begin{align}
\psi(s|W^E,P_{\mix, C_1})
\le
\phi(s|W^E,P_{\mix, C_1})
\le \phi(s|W^E,P_{\mix, {\cal X}})
.\Label{2-13-1}
\end{align}
Thus, combining (\ref{4-27-1}), (\ref{2-13-1}), and (\ref{12-26-2-b}), 
we obtain
\begin{align}
\rE_{\bX} I_E(\Phi_{C_1,C_2(\bX)}) 
\le 
\frac{|{\cal X}|^s
e^{- (1-s)\tilde{H}_{1+\frac{s}{1-s}}(X|P)}}{L^s s}  
\Label{4-27-2}
\end{align}
for $0 < \forall s \le 1$.}

Similarly, when the channel $W^E$ is general additive, i.e., 
$W^E_x(z,z')=P^{X,Z'}(z-x,z')$,
we obtain
\red{
\begin{align}
\rE_{\bX} I_E(\Phi_{C_1,C_2(\bX)}) 
\le &
\frac{|{\cal X}|^s e^{- (1-s)\tilde{H}_{1+\frac{s}{1-s}}(X|Z'|P^{X,Z'})}}{L^s s}
\Label{4-27-3} \\
\rE_{\bX\red{,\bY}} I_E(\Phi_{C_1\red{(\bY),}C_2(\bX)}) 
\le &
\frac{|{\cal X}|^s e^{-\tilde{H}_{1+s}(X|Z'|P^{X,Z'})}}{L^s s} 
\end{align}
for $0 < \forall s <1$.}

%In the following, this method choosing $\Phi_{C_1,C_2(\bX)}$ is called universal random linear privacy amplification.
In the following discussion, we assume that
$\cX$ is an $n$-dimensional vector space $\bF_q^n$
over the finite field $\bF_q$.
Then, the subcode $C_2(\bX)$ of the random linear privacy amplification
can be constructed with small complexity.
That is, when $C_1$ is equivalent to $\bF_q^m$,
an ensemble of the subcodes $C_2(\bX)$
satisfying Condition \ref{C2} can be generated 
from only the $m-1$ independent random variables $X_1, \ldots, X_{m-1}$
on the finite field $\bF_q$ as follows.

When $|C_2(\bX)|=q^k$,
we choose the subcode $C_2(\bX)$ as the kernel of the 
the concatenation of Toeplitz matrix and the identity
$(\bX,I)$ of the size $m \times (m-k)$ 
given in Appendix \ref{stoep}.
Then, 
the encoding $\{Q_{[x]}\}_{[x]\in C_1/C_2(\bX)}$ 
is constructed as follows.
When the sent message is $x \in \bF_q^k$,
it is transformed to 
$(b,x-\bX b)^T \in \bF_q^m$,
where $b=(b_1, \ldots, b_{k})$ 
are $k$ independent random variables.
This process forms the encoding $\{Q_{[x]}\}_{[x]\in C_1/C_2(\bX)}$
because the set $\{(b, -\bX b)^T|b \in  \bF_q^k\}$ is equal to $C_2(\bX)$.
This can be checked using the fact 
that $(\bX,I)(b,x-\bX b)^T = x $ and 
the set $\{(b,-\bX b)^T|b \in  \bF_q^k\}$
forms a $k$-dimensional space.

Therefore, if the error correcting code 
$C_1$ can be constructed with effective encoding and decoding times
and $W^E$ is additive or general additive,
the code $\Phi_{C_1,C_2(\bX)}$ for a wire-tap channel 
satisfying the inequality (\ref{4-27-2}) or (\ref{4-27-3})
can be constructed by using random linear privacy amplification.

Furthermore, for the $n$-fold discrete memoryless case of the wire-tap channel
$W^B, W^E$,
it is possible to achieve the rate
$I(P_{\mix,\cX}:W^B)-I(P_{\mix,\cX}:W^E)$
by a combination of this error correcting 
and random linear privacy amplification
when an error correcting code 
attaining the Shannon rate $I(P_{\mix,\cX}:W^B)$
is available and the channel $W^E$ is general additive, i.e., $W^{E}_{x}(z,z')=P^{X,Z'}(z-x,z')$.
In this case, when the sacrifice information rate is $R$, 
as follows from the discussion of Section \ref{s6} and (\ref{4-27-3}),
the exponent of Eve's information is greater than
$\max_{0 \le s \le 1}\frac{s (R-\log |{\cal X}|)+ \tilde{H}_{1+s}(X|Z'|P^{X,Z'})}{\red{1+s}} 
=
\max_{0 \le s \le 1}
\frac{s}{\red{1+s}} (R-\log |{\cal X}|+ H_{1+s}(X|Z'|P^{X,Z'}) )$.

This method is very useful when the channels 
$W^B$ and $W^E$ are additive.
However, even if the channels are not additive or general additive,
this method is still useful because 
it requires 
only a linear code and random privacy amplification,
which is simpler requirement than 
that of the random coding method given in the proof of Theorem \ref{3-6}
while this method cannot attain the optimal rate.

\section{Secret key agreement}\Label{s5}
Next, following Maurer\cite{Mau93}, we apply the above discussions to secret key agreement,
in which, 
Alice, Bob, and Eve are assumed to have initial random variables 
$a\in {\cal A}$, $b\in {\cal B}$, and $e\in {\cal E}$, respectively.
The task for Alice and Bob is to share a common random variable almost independent of Eve's random variable $e$ by using a public communication.
The quality is evaluated by three quantities:
the size of the final common random variable,
the probability that their final variables coincide, and
the mutual information between Alice's final variables and Eve's random variable.
In order to construct a protocol for this task,
we assume that the set ${\cal A}$ has a module structure 
(any finite set can be regarded as a cyclic group).
Then, the objective of secret key agreement can be realized by applying the code of a wire-tap channel
as follows.
First, 
Alice generates another uniform random variable $x$ and sends
the random variable $x':= x-a$.
Then, 
the distribution of the random variables $b,x'$ ($e,x'$) accessible to Bob (Eve)
can be regarded as the output distribution of the channel $x \mapsto W^{B}_x$
($x \mapsto W^{E}_x$).
The channels $W^{B}$ and $W^{E}$ are given as follows.
\begin{align}
W^{B}_x(b,x')&= P^{AB}(x-x',b) \nonumber \\
W^{E}_x(e,x')&= P^{AE}(x-x',e),\Label{4-26-1}
\end{align}
where $P^{AB}(a,b)$ 
($P^{AE}(a,e)$) 
is the joint probability between
Alice's initial random variable $a$ and
Bob's (Eve's) initial random variable $b$ ($e$).
Hence, the channel $W^E$ is general additive.

Applying Theorem \ref{3-6} to the uniform distribution $P_{\mix}^A$,
for any numbers $M$ and $L$,
there exists a code $\Phi$ such that
\begin{align*}
|\Phi| & =M \\
\epsilon_B(\Phi) & \le 2 \min_{0\le s\le1} (ML)^s |\cA|^{-s} e^{-(1+s)\tilde{H}_{\frac{1}{1+s}}(A|B|P^{A,B})} \\
I_E(\Phi) & \le 2 
\min_{0\le s\le1} \frac{|{\cal A}|^s e^{-\tilde{H}_{1+s}(A|E|P^{A,E})}}{s L^s }
\end{align*}
because 
$e^{\phi(-s|W^B,P_{\mix,\cA})}=|\cA|^{-s} e^{-(1+s)\tilde{H}_{\frac{1}{1+s}}(A|B|P^{A,B})}$.
and 
$\psi(s|W^E,P_{\mix,\cA})=
s \log |{\cal A}|-\tilde{H}_{1+s}(A|E|P^{A,E}) 
=
s (\log |{\cal A}|-H_{1+s}(A|E|P^{A,E}) )$.

In particular, 
when $\cX$ is an $n$-dimensional vector space $\bF_q^n$
over the finite field $\bF_q$ and
the joint distribution between 
$A$ and $B$($E$) 
is the $n$-fold independent and identical distribution (i.i.d.)
of $P^{A,B}$ ($P^{A,E}$), respectively,
the relation
%$\phi(s|P^{A^n,B^n})=n \phi(s|P^{A,B})$ and 
$
\tilde{H}_{1+s}(A^n|E^n|(P^{A,E})^n)
=n \tilde{H}_{1+s}(A|E|P^{A,E})$
holds.
Thus, there exists a code $\Phi_n$ for any integers $L_n,M_n$,
and any probability distribution $p$ on $\cX$ such that
\begin{align}
|\Phi_n| &=M_n \nonumber \\
\epsilon_B(\Phi) & \le 2
\min_{0\le s\le 1}
(M_n L_n)^{s}
|{\cal A}|^{-ns} e^{-n(1+s)\tilde{H}_{\frac{1}{1+s}}(A|B|P^{A,B})}
%e^{n \phi(s|P^{A,B})}
%\Label{3-8-1-a}
\nonumber
\\
I_E(\Phi_n) & \le  2
\min_{0\le s\le1} \frac{|{\cal A}|^{ns} e^{-n \tilde{H}_{1+s}(A|E|P^{A,E})}}{s L_n^s }.
\Label{2-13-2}
\end{align}
Hence, 
the achievable rate of this protocol
is equal to 
\begin{align*}
&I(P_{\mix,\cA}:W^B)- I(P_{\mix,\cA}:W^E)\\
=& H(P^B)+H(P_{\mix,\cA})- H(P^{A,B}) \\
&- (H(P^E)+H(P_{\mix,\cA})- H(P^{A,E}))\\
= &H(P^B)+H(P^A)- H(P^{A,B}) \\
&- (H(P^E)+H(P^A)- H(P^{A,E}))\\
= &I(A:B)-I(A:E)
=H(A|E)-H(A|B),
\end{align*}
which was obtained by Maurer\cite{Mau93}
and Ahlswede-Csisz\'{a}r\cite{AC93}.
Here, since the channels $W^B$ and $W^E$ can be regarded as general additive,
we can apply the discussion in Section \ref{s3}.
That is,
the bound (\ref{2-13-2}) can be attained 
with the combination of a linear code and random privacy amplification,
which is given in Section \ref{s3}.

\section{Discussion}
We have derived an upper bound for Eve's information in secret key generation from a common random number without communication
when a universal$_2$ hash function is applied.
Since our bound is based on the R\'{e}nyi entropy of order $1+s$ for $s \in [0,1]$,
it can be regarded as an extension of Bennett et al \cite{BBCM}'s result with the R\'{e}nyi entropy of order 2.

Applying this bound to the wire-tap channel, we obtain an 
upper bound for Eve's information,
which yields an exponential upper bound.
This bound improves on the existing bound \cite{Hayashi}. 
Further, when the error correction code is given by a linear code
and when the channel is additive or general additive,
the privacy amplification is given by the concatenation of Toeplitz matrix and the identity.
Finally, 
our result has been applied to secret key agreement with public communication.

\section*{Acknowledgments}
This research
was partially supported by 
a Grant-in-Aid for Scientific Research in the Priority Area `Deepening and Expansion of Statistical Mechanical Informatics (DEX-SMI)', No. 18079014
and a MEXT Grant-in-Aid for Young Scientists (A) No. 20686026.
The author is grateful to Professor Ryutaroh Matsumoto for a helpful comment for proof of Theorem 2 
and inequalities (\ref{4-27-2}) and (\ref{4-27-3}), 
and interesting discussions.
The author thanks Professors Renato Renner and Shun Watanabe for helpful discussions.
In particular, he greatly thanks Professor Shun Watanabe 
for allowing him to including his example mentioned in Appendix III.
He is also grateful to the
referees for helpful comments concerning this manuscript.

\bibliographystyle{IEEE}

\appendices

\section{Proof of Theorem \ref{thm1}}
%inequality (\ref{5-14-1})}
\Label{s7}
The concavity of $x \mapsto x^s$ implies that
\begin{align*}
& \rE_{\bX}e^{-\tilde{H}_{1+s}(X|P\circ f_{\bX}^{-1})}
=
\rE_{\bX} 
\sum_{i=1}^M
P\circ f_{\bX}^{-1}(i)
P\circ f_{\bX}^{-1}(i)^s \\
= & \sum_x 
P(x) \rE_{\bX} 
(\sum_{x': f_{\bX}(x)=f_{\bX}(x')} P(x'))^s \\
\le &
\sum_x 
P(x) 
(
\rE_{\bX} 
\sum_{x': f_{\bX}(x)=f_{\bX}(x')} P(x'))^s.
\end{align*}
Condition \ref{C1} guarantees that
\begin{align*}
\rE_{\bX} 
\sum_{x': f_{\bX}(x)=f_{\bX}(x')} P(x')
\le &
P(x) + \sum_{x\neq x'}P(x')
\frac{1}{M} \\
\le &
P(x) + \frac{1}{M}.
\end{align*}
Since any two positive numbers $x$ and $y$ satisfy 
$(x+y)^s \le x^s +y^s$ for $0 \le s \le 1$,
\begin{align*}
(P(x) + \frac{1}{M})^s
\le P(x)^s + \frac{1}{M^s}.
\end{align*}
Hence,
\begin{align*}
& \rE_{\bX}e^{-\tilde{H}_{1+s}(X|P\circ f_{\bX}^{-1})}
\le
\sum_x 
P(x) 
(P(x)^s + \frac{1}{M^s}) \\
= &
\sum_x P(x)^{1+s}
+ \frac{1}{M^s}
=
e^{-\tilde{H}_{1+s}(X|P)}
+ \frac{1}{M^s}.
\end{align*}
Therefore, 
taking the expectation with respect to the random variable $E$,
we have
\begin{align}
\rE_{\bX}e^{-\tilde{H}_{1+s}(A|E|P^{f_{\bX}(A),E})}
\le
e^{-\tilde{H}_{1+s}(A|E|P^{A,E})}
+ \frac{1}{M^s}.\Label{5-14-2}
\end{align}
The concavity of the logarithm implies 
\begin{align*}
\tilde{H}_{1+s}(A|E|P^{A,E})\le s H(A|E).
\end{align*}
Thus, From (\ref{5-14-2}),
the concavity of the logarithm yields
that 
\begin{align*}
& s \rE_{\bX} H(f_{\bX}(A)|E)
\ge
\rE_{\bX} \tilde{H}_{1+s}(A|E|P^{A,E}) \\
\ge &
-\log 
\rE_{\bX} e^{-\tilde{H}_{1+s}(A|E|P^{A,E})} \\
\ge &
-\log (e^{-\tilde{H}_{1+s}(A|E|P^{A,E})}
+ \frac{1}{M^s})\\
= &
s\log M
- \log (
1+ M^s e^{-\tilde{H}_{1+s}(A|E|P^{A,E})}) \\
\ge &
s\log M 
- M^s e^{-\tilde{H}_{1+s}(A|E|P^{A,E})},
\end{align*}
where the last inequality follows from 
the logarithmic inequality $\log (1+x) \le x$.
Therefore, we obtain (\ref{5-14-1}).

\section{Toeplitz matrix}\Label{stoep}
The concatenation of Toeplitz matrix and the identity
$(\bX,I)$ of size $m \times (m-k)$ 
on the finite filed $\bF_q$ is given as follows.
First, we choose an $m-1$ random variables $X_1, \ldots, X_{m-1}$
on the finite filed $\bF_q$.
$I$ is the $(m-k) \times (m-k)$ identity matrix and 
the $k \times (m-k)$ matrix $\bX =(X_{i,j})$ 
is defined by 
the $m-1$ random variables $X_1, \ldots, X_{m-1}$
as follows.
\begin{align*}
X_{i,j}=X_{i+j-1} .
\end{align*}
This matrix is called a Toeplitz matrix.

Now, we prove that 
the $m \times (m-k)$ matrices $(\bX,I)$ satisfy 
Condition \ref{C2}.
More precisely, we show the following.
(1) An element $(x,y)^T \in \bF_q^k \oplus \bF_q^{-(m-k)}$ 
belongs to the kernel of $(\bX,I)$
with probability $q^{k}$ if $x \neq 0$ and $y \neq 0$.
(2) It does not belong to the kernel of the
$m \times (m-k)$ matrix $(\bX,I)$
if $x = 0$ and $y \neq 0$.

Indeed, since (2) is trivial, we will show (1).
For $x=(x_1, \ldots, x_k)$, 
we let $i$ be the minimum index $i$ such that $x_i \neq 0$.
We fix the $k-i$ random variables $X_{i+(m-k)-1}, \ldots,X_{m-1}$.
That is, we show that
the element $(x,y)^T$ belongs to the kernel with probability $q^{k}$
when the $k-i$ random variables $X_{i+(m-k)-1}, \ldots,X_{m-1}$ are fixed.
Then, the condition $\bX x+y=0$ can be expressed as the following
$m-k$ conditions.
\begin{align*}
X_i x_1 &= - \sum_{j=i+1}^{k} X_{j} x_j -y_1 \\
X_{i+1} x_2 &= - \sum_{j=i+1}^{k} X_{j+1} x_j -y_2 \\
& \vdots \\
X_{i+m-k-2} x_{m-k-1} &= - \sum_{j=i+1}^{k} X_{j+m-k-2} x_j -y_{m-k-1} \\
X_{i+m-k-1} x_{m-k} &= - \sum_{j=i+1}^{k} X_{j+m-k-1} x_j -y_{m-k} .
\end{align*}
The $(m-k)$-th condition does not depend on the
$m-k-1$ variables $X_i, \ldots X_{i+(m-k)-1}$.
Hence, this condition only depends on the variable $X_{i+m-k-1}$.
Therefore, the $(m-k)$-th condition holds with probability
$1/q$.
Similarly, we can show that
the $(m-k-1)$-th condition holds with probability
$1/q$ under the $(m-k)$-th condition.
Thus, the $(m-k)$-th condition and the $(m-k-1)$-th condition 
hold with probability $1/q^2$.
Repeating this discussion inductively,
we can conclude that
all $m-k$ conditions hold with probability $q^{-(m-k)}$.

\section{Two leaked information criteria}\Label{a3}
In this appendix, we explain an example, in which,
the leaked information criterion based on the variational distance is small 
but  
the leaked information criterion based on the mutual information is large.
This example is proposed by Shun Watanabe\cite{Wata}.
The former criterion is given as \cite{Cannetti}
\begin{align*}
d_1( P^{A,E} ,P^A_{\mix} \times P^E),
\end{align*}
where $P^A_{\mix}$ is the uniform distribution on ${\cal A}$
and the variational distance is given as $d_1(P,Q):=\sum_{x}|P(x)-Q(x)|$.
%When $P^A$ is the uniform distribution,
Pinsker inequality \cite{CKbook} guarantees that
\begin{align*}
& d_1( P^{A,E} ,P^A_{\mix} \times P^E) \\
\le &
d_1( P^{A,E} ,P^A \times P^E)
+
d_1( P^A \times P^E ,P^A_{\mix} \times P^E) \\
\le &
D( P^{A,E} \| P^A \times P^E)^2
+d_1( P^A ,P^A_{\mix}) \\
= & I(A:E)^2+d_1( P^A ,P^A_{\mix}),
\end{align*}
where $D(P\|Q):= \sum_{x}P(x) (\log P(x)-\log Q(x))$.
This inequality shows that 
when $d_1( P^A ,P^A_{\mix})$ and $I(A:E)^2$ are close to zero,
$d_1( P^{A,E} ,P^A_{\mix} \times P^E)$ is also close to zero.

Assume that 
the Eve's distribution $P^E$ is the uniform distribution,
and ${\cal E}={\cal A}$.
For any small real number $\epsilon >0$,
we define a subset ${\cal S} \subset {\cal E}$ such that
$P^E({\cal S})=1-\epsilon$.
The conditional distribution $P^{A|E}$ is assumed to be given as
\begin{align*}
P^{A|E}(a|e):=
\left\{
\begin{array}{ll}
\frac{1}{|{\cal E}|} & \hbox{ if } e \in {\cal S} \\
\delta_{a,e} & \hbox{ if }  \in {\cal S}^c ,
\end{array}
\right.
\end{align*}
where $\delta_{a,e}$ is $1$ when $a=e$, and is $0$ otherwise.
Then, the leaked information criterion based on the variational distance is 
evaluated as
\begin{align*}
&d_1( P^{A,E} ,P^A_{\mix} \times P^E)
=\sum_{e\in {\cal E}} P^E(e) d_1( P^{A|E} ,P^A_{\mix})\\
=&\sum_{e\in {\cal S}} P^E(e) d_1( P^{A|E} ,P^A_{\mix})
+\sum_{e\in {\cal S}^c} P^E(e) d_1( P^{A|E} ,P^A_{\mix})\\
\le & 2 \epsilon.
%\Label{9-7-1}
\end{align*}
In oder to evaluate the leaked information criterion based on the mutual information,
we focus on the probability
\begin{align*}
P_e:= P^{A,E}\{a\neq e \}.
\end{align*}
Fano inequality\cite{CKbook} yields that
\begin{align*}
H(E|A)\le 1 + P_e \log |{\cal E}|.
\end{align*}
Since $P_e \le 1- \epsilon$,
\begin{align*}
& I(A:E)=H(E)-H(E|A)
\ge H(E)-  1 - P_e \log |{\cal E}| \\
=&  \log |{\cal E}|-  1 - P_e \log |{\cal E}|
\ge -  1 + \epsilon \log |{\cal E}|.
%\Label{9-7-2}
\end{align*}
In particular, when ${\cal E}=\{0,1\}^{n^2}$ and $\epsilon =\frac{1}{n}$,
\begin{align*}
d_1( P^{A,E} ,P^A_{\mix} \times P^E)
\le \frac{2}{n} ,\quad
I(A:E)
 \ge n-  1 .
\end{align*}

This example shows that 
even if 
$d_1( P^{A,E} ,P^A_{\mix} \times P^E)$
is close to zero,
there is a possibility that
$I(A:E)$ is not close to zero.
Hence, we cannot guarantee 
the security based on mutual information
from the security based on variational distance
while we can guarantee 
the security based on variational distance from the security based on mutual information
when $d_1( P^{A} ,P^A_{\mix})$ is close to zero.
Therefore, 
the leaked information criterion based on the mutual information 
is more restrictive than that based on variational distance.

%Assume that Alice's information obeys the uniform distribution $P_{\mix}$

\begin{figure*}[!t]

\section{Proof of (\ref{2-25-1})}
\Label{a4}

Since 
\begin{align}
I(p,W)=\sum_x p(x) D(W^E_x\|W^E_p) \le \sum_x p(x) D(W^E_x\|Q) 
\Label{eq6}
\end{align}
holds for any distribution $Q$,
\begin{align}
& \rE_{\bY} E_{\bX|\bY} I_E(\Phi(\bX,\bY)') 
\le 
\rE_{\bY} E_{\bX|\bY} 
\frac{1}{LM}
\sum_{k=1}^{LM}
D(\frac{1}{L} \sum_{k':f_{\bX}(k')=f_{\bX}(k) } W^E_{f_{\Phi(\bY)}(k')}
\| W^E_p) \Label{eq1} \\
= &
\rE_{\bY} E_{\bX|\bY} 
\frac{1}{LM}
\sum_{k=1}^{LM}
\sum_{y} 
\frac{1}{L}\sum_{k'':f_{\bX}(k'')=f_{\bX}(k) } W^E_{f_{\Phi(\bY)}(k'')}(y)
(\log (\frac{1}{L}\sum_{k':f_{\bX}(k')=f_{\bX}(k) } W^E_{f_{\Phi(\bY)}(k')}(y))-
\log W^E_p(y)) \nonumber \\
%= &
%\rE_{\bY} E_{\bX|\bY} 
%\frac{1}{LM} \sum_{k=1}^{LM}
%\frac{1}{L} \sum_{k'':f_{\bX}(k'')=f_{\bX}(k) } 
%\sum_{y} W^E_{f_{\Phi(\bY)}(k'')}(y)
%(\log (\frac{1}{L} \sum_{k':f_{\bX}(k')=f_{\bX}(k) } W^E_{f_{\Phi(\bY)}(k')}(y))- \log W^E_p(y)) \\
= &
\rE_{\bY} E_{\bX|\bY} 
\frac{1}{LM}
\sum_{k=1}^{LM}
\sum_{y} 
W^E_{f_{\Phi(\bY)}(k)}(y)
(\log (\frac{1}{L} \sum_{k':f_{\bX}(k')=f_{\bX}(k) } W^E_{f_{\Phi(\bY)}(k')}(y))-
\log W^E_p(y)) \nonumber \\
\le &
\rE_{\bY} 
\frac{1}{LM}
\sum_{k=1}^{LM}
\sum_{y} 
W^E_{f_{\Phi(\bY)}(k)}(y)
( \log ( \frac{1}{L} W^E_{f_{\Phi(\bY)}(k)}(y)+E_{\bX|\bY} \frac{1}{L} \sum_{k'\neq k:f_{\bX}(k')=f_{\bX}(k) } W^E_{f_{\Phi(\bY)}(k')}(y))-
\log W^E_p(y)) \Label{eq2} \\
\le &
\rE_{\bY} 
\frac{1}{LM}
\sum_{k=1}^{LM}
\sum_{y} 
W^E_{f_{\Phi(\bY)}(k)}(y)
( \log (\frac{1}{L} W^E_{f_{\Phi(\bY)}(k)}(y)+\frac{1}{M L} \sum_{k'\neq k} W^E_{f_{\Phi(\bY)}(k')}(y))-
\log W^E_p(y)) \Label{eq3} \\
= &
\frac{1}{LM}
\sum_{k=1}^{LM}
\sum_{y} 
\rE_{\bY_k} 
W^E_{f_{\Phi(\bY)}(k)}(y)
\rE_{\bY|\bY_{k}}
(\log (\frac{1}{L} W^E_{f_{\Phi(\bY)}(k)}(y)+\frac{1}{M L} \sum_{k'\neq k} W^E_{f_{\Phi(\bY)}(k')}(y))-
\log W^E_p(y)) \nonumber  \\
\le &
\frac{1}{LM}
\sum_{k=1}^{LM}
\sum_{y} 
\rE_{\bY_k} 
W^E_{f_{\Phi(\bY)}(k)}(y)
(\log (\frac{1}{L} W^E_{f_{\Phi(\bY)}(k)}(y)+\frac{1}{M L} \rE_{\bY|\bY_{k}}
\sum_{k'\neq k} W^E_{f_{\Phi(\bY)}(k')}(y))-
\log W^E_p(y)) \Label{eq4} \\
%\le &
%\frac{1}{LM} \sum_{k=1}^{LM} \sum_{y} 
%\rE_{\bY_k} 
%W^E_{f_{\Phi(\bY)}(k)}(y) ( \log (\frac{1}{L} W^E_{f_{\Phi(\bY)}(k)}(y)+\frac{ML-1}{M L} W^E_p(y)) -\log W^E_p(y)) \\
\le &
\frac{1}{LM}
\sum_{k=1}^{LM}
\sum_{y} 
\rE_{\bY_k} 
W^E_{f_{\Phi(\bY)}(k)}(y)
(\log (\frac{1}{L} W^E_{f_{\Phi(\bY)}(k)}(y)+W^E_p(y)) -\log W^E_p(y)) \Label{eq5}\\
= &
\frac{1}{LM}
\sum_{k=1}^{LM}
\sum_{y} 
\rE_{\bY_k} 
W^E_{f_{\Phi(\bY)}(k)}(y)
\log ( 1+ \frac{1}{L}\frac{W^E_{x}(y)}{W^E_p(y)})\nonumber \\
= &
\frac{1}{LM}
\sum_{k=1}^{LM}
\sum_{y} 
\sum_{x\in {\cal X}}
p(x)
W^E_x(y)
\log ( 1+ \frac{1}{L}\frac{W^E_{x}(y)}{W^E_p(y)})
= 
\sum_{y} 
\sum_{x\in {\cal X}} p(x)
W^E_x(y)
\log ( 1+ \frac{1}{L}\frac{W^E_{x}(y)}{W^E_p(y)}),
\nonumber 
\end{align}
where the random variable $f_{\Phi(\bY)}(k)$ is simplified to $\bY_k$.
In the above derivation, 
(\ref{eq1}) follows from (\ref{eq6}),
(\ref{eq2}) and (\ref{eq4}) follow from the concavity of $\log x$,
and
(\ref{eq3}) and (\ref{eq5}) follow from Conditions \ref{C1} and \ref{C12}.

Since the inequalities $(1+x)^s \le 1+ x^s$ 
and $\log (1+x) \le x$
hold for any positive $x$ and $0 < s \le 1$,
the inequalities
\begin{align}
\log (1+x) \le \frac{\log (1+x)^s}{s}\le
\frac{\log (1+x^s)}{s}\le 
\frac{x^s}{s}\Label{4-27-14}
\end{align}
hold.
Using this inequality,
we obtain
\begin{align}
\sum_{y} 
\sum_{x\in {\cal X}} p(x)
W^E_x(y)
\log ( 1+ \frac{1}{L}\frac{W^E_{x}(y)}{W^E_p(y)})
\le 
\sum_{y} 
\sum_{x\in {\cal X}} p(x)
W^E_x(y)
\frac{1}{s L^s}\frac{W^E_{x}(y)^s}{W^E_p(y)^s})
=
\frac{1}{s L^s} e^{\psi(s|W^E,p)}
\Label{4-27-8-1},
\end{align}
which implies (\ref{2-25-1}).

\end{figure*}

\newpage

\begin{figure*}[!t]
\section{Proof of (\ref{4-27-2-c})}
\Label{a5}
Since
\begin{align*}
& I_E(\Phi_{C_1,C_2(\bX)}) 
=
\sum_y 
\frac{1}{|C_1|}\sum_{x'\in C_1}W^E_{x'}(y)
(\log ( \frac{1}{|C_2(\bX )|} \sum_{x'': x'-x''\in C_2(\bX )} W^E_{x''}(y)) 
- \log ( \frac{1}{|C_1|} \sum_{x''' \in C_1} W^E_{x''}(y)) )\\
\le &
\sum_y 
\frac{1}{|C_1|}\sum_{x'\in C_1}W^E_{x'}(y)
(\log ( \frac{1}{|C_2(\bX )|} \sum_{x'': x'-x''\in C_2(\bX )} W^E_{x''}(y)) 
- \log (  W^E_{P_{\mix, {\cal X}}}(y)) ),
\end{align*}
we have
\begin{align}
& \rE_{\bY} E_{\bX|\bY} I_E(\Phi_{C_1(\bY),C_2(\bX)}) \nonumber \\
\le &
\rE_{\bY} E_{\bX|\bY}
\sum_y 
\frac{1}{ML} \sum_{x'\in C_1(\bY)}W^E_{x'}(y)
(\log ( \frac{1}{L} \sum_{x'': x'-x''\in C_2(\bX )} W^E_{x''}(y)) 
- \log (  W^E_{P_{\mix, {\cal X}}}(y)) )\nonumber \\
= &
\rE_{\bY} E_{\bX|\bY}
\sum_y 
\frac{1}{ML}\sum_{x'\in C_1(\bY)}W^E_{x'}(y)
(\log ( \frac{1}{L} W^E_{x'}(y)+
\frac{1}{L} \sum_{x'': x'-x''\in C_2(\bX ), x'\neq x''} W^E_{x''}(y)
) 
- \log (  W^E_{P_{\mix, {\cal X}}}(y)) ) \nonumber \\
\le &
\rE_{\bY} 
\sum_y 
\frac{1}{ML}\sum_{x'\in C_1(\bY)}W^E_{x'}(y)
(\log ( \frac{1}{L} W^E_{x'}(y)+
E_{\bX|\bY} \frac{1}{L} \sum_{x'': x'-x''\in C_2(\bX ), x'\neq x''} W^E_{x''}(y)
) 
- \log (  W^E_{P_{\mix, {\cal X}}}(y)) ) \Label{eq7} \\
\le &
\rE_{\bY} 
\sum_y 
\frac{1}{ML} \sum_{x'\in C_1(\bY)}W^E_{x'}(y)
(\log ( \frac{1}{L} W^E_{x'}(y)+
\frac{1}{L}\frac{L}{ML} \sum_{x''\in C_1 \setminus\{x'\}} W^E_{x''}(y)) 
- \log (  W^E_{P_{\mix, {\cal X}}}(y)) ) \Label{eq8} \\
= &
\sum_y 
\frac{1}{|{\cal X}|} \sum_{x'\in {\cal X}}W^E_{x'}(y)
\rE_{\bY|x'\in C_1(\bY)} 
(\log ( \frac{1}{L} W^E_{x'}(y)+ \frac{1}{ML} \sum_{x''\in C_1(\bY) \setminus\{x'\}} W^E_{x''}(y) )
- \log (  W^E_{P_{\mix, {\cal X}}}(y)) ) \nonumber \\
\le &
\sum_y 
\frac{1}{|{\cal X}|} \sum_{x'\in {\cal X}}W^E_{x'}(y)
(\log ( \frac{1}{L} W^E_{x'}(y)+ \frac{1}{ML} \rE_{\bY|x'\in C_1(\bY)} \sum_{x''\in C_1(\bY) \setminus \{x'\}} W^E_{x''}(y) )
- \log (  W^E_{P_{\mix, {\cal X}}}(y)) ) \Label{eq9} \\
%\le &
%\sum_y 
%\frac{1}{|{\cal X}|} \sum_{x'\in {\cal X}}W^E_{x'}(y)
%(\log ( \frac{1}{L} W^E_{x'}(y)+ \frac{1}{ML} \frac{ML}{|{\cal X}|} \sum_{x''\in {\cal X} \setminus \{x'\}} W^E_{x''}(y) )
%- \log (  W^E_{P_{\mix, {\cal X}}}(y)) ) \\
\le &
\sum_y 
\frac{1}{|{\cal X}|} \sum_{x'\in {\cal X}}W^E_{x'}(y)
(\log ( \frac{1}{L} W^E_{x'}(y)+ \frac{1}{|{\cal X}|} \sum_{x''\in {\cal X} \setminus \{x'\}} W^E_{x''}(y) )
- \log (  W^E_{P_{\mix, {\cal X}}}(y)) ) \Label{eq10} \\
\le &
\sum_y 
\frac{1}{|{\cal X}|} \sum_{x'\in {\cal X}}W^E_{x'}(y)
(\log ( \frac{1}{L} W^E_{x'}(y)+ W^E_{P_{\mix,{\cal X}}}(y) )
- \log (  W^E_{P_{\mix, {\cal X}}}(y)) ) \nonumber \\
= &
\sum_y 
\frac{1}{|{\cal X}|} \sum_{x'\in {\cal X}}W^E_{x'}(y)
\log ( 1+ \frac{1}{L} \frac{W^E_{x'}(y)}{W^E_{P_{\mix,{\cal X}}}(y)} ),
\nonumber
\end{align}
where $\rE_{\bY|C}$ is the conditional expectation concerning the random variable $\bX$ when the condition $C$ holds.
In the above derivation,
(\ref{eq7}) and (\ref{eq9}) follow from the concavity of $\log x$,
and
(\ref{eq8}) and (\ref{eq10}) follow from Conditions \ref{C2} and \ref{C4}.

Using (\ref{4-27-14}),
we obtain (\ref{4-27-2-c}).
\end{figure*}

\end{document}